\documentclass[runningheads]{llncs}
\usepackage{amsmath}
\usepackage{amssymb}
\usepackage{cite}
\usepackage{url}

\numberwithin{equation}{section}

\title{MEV in Multiple Concurrent Proposer Blockchains}
\author{
  Steven Landers\inst{1} \and
  Benjamin Marsh\inst{1, 2}
}

\institute{
  Sei Labs \and
  University of Portsmouth\\\email{\{steven, ben\}@seinetwork.io}
}

\authorrunning{S. Landers and B. Marsh}
\begin{document}
\maketitle

\begin{abstract}
    We analyze maximal extractable value in multiple concurrent proposer blockchains, where multiple blocks become data available before their final execution order is determined. This concurrency breaks the single builder assumption of sequential chains and introduces new MEV channels, including same tick duplicate steals, proposer to proposer auctions, and timing races driven by proof of availability latency. We develop a hazard normalized model of delay and inclusion, derive a closed form delay envelope \(M(\tau)\), and characterize equilibria for censorship, duplication, and auction games. We show how deterministic priority DAG scheduling and duplicate aware payouts neutralize same tick MEV while preserving throughput, identifying simple protocol configurations to mitigate MCP specific extraction without centralized builders.
\end{abstract}

\section{Introduction}
In this work we formalize several new MEV and collusion games introduced by multiple concurrent proposer (MCP) blockchains that are absent from sequential chains. We actively avoid discussing related problems such as TFMs and pricing in this work, and where auctions are discussed a generic EIP-1559 style mechanism with an inclusion price and an optional priority fee is assumed. As blockchains strive to scale, three clear paths have appeared in industry, ZK approaches, which today generally rely on a single sequencer, sharded approaches such as NEAR~\cite{skidanov2019nightshade}, and
more recently MCP approaches~\cite{garimidi2025multiple, marsh2025sei, NeuderResnick2024ConcurrentProposers, FoxResnick2023Multiplicity, Li2020Conflux}. In an MCP model each proposer in the network is able to propose 1 or more blocks in each tick of the system, as opposed to the single proposer in a traditional sequential model. This model allows for a higher block production rate, and a limited impact of an offline producer in any given tick of the system. We now describe, loosely, a simplified version of Sei Giga \cite{marsh2025sei} which will be used as an example of an MCP blockchain in this work. We assume an Autobahn \cite{giridharan2024autobahn} style BFT consensus protocol with a separate DD and consensus layer, see also \cite{Spiegelman2022Bullshark, Babel2024Mysticeti}. In such a protocol transactions are sent to RPC nodes where they are randomly allocated, uniformly, to any staked proposer in the network. There is no global mempool, though any proposer could run a local mempool. The proposer is then able to publish blocks as and when they desire to the DD layer. Once a block is on the DD layer it is gossiped to at least $f+1$ proposers to generate a proof of availability \cite{cohen2023proof} certificate. Each tick of the system, which happens every circa 300ms, a consensus proposal is made over the blocks with valid PoAs, without the blocks being gossiped to all voters. The blocks are executed after consensus with all blocks in the tick being executed as a single large block, to allow for dependencies between transactions sent to different proposers to be handled. Any invalid transactions, or duplicate transactions, are not executed and can be pruned by anyone but an archive node. In this work we assume an EIP-1559 style TFM with a posted price inclusion fee for execution which is burnt and a tip which is given the proposer. Because most DeFi flows involve tightly coupled multi-hop transactions, MCP concurrency introduces reaction surfaces that simply do not exist in single-builder mempools. We establish a hazard-normalized delay model for MCP MEV, deriving a closed-form delay envelope $M(\tau)$ and an immediate inclusion threshold $\tau^\dagger$, extending sequential chain delay models to the concurrent setting. We then formalize MCP specific MEV channels such as same tick duplicate steals, proposer-to-proposer auctions, PoA-latency timing races, and multi-submission externalities, and give explicit cutoff conditions for censorship, stealing, auctions, timing, and spam. Finally, we propose two protocol mechanisms that neutralize same-tick extraction, duplicate-aware tip splitting, which collapses the duplication prize, and a deterministic priority-DAG merge scheduler that preserves dependencies while enforcing tip priority. These mechanisms compress timing races and identify PoA rate as the core determinant of stealability, isolating which MEV channels are intrinsic to MCP and which can be pushed to $\varepsilon$ through straightforward protocol choices. Same tick ordering affects not only classic AMM arbitrage but also cascading DeFi events such as oracle writes, funding rate adjustments, and multi-venue liquidation waterfalls.
\subsection{MEV}
MEV refers to the additional profit a participant with influence over transaction ordering, inclusion, or censorship can obtain beyond posted tips and base fees. In transparent, first come first served mempools, this influence generates a market for both order flow and order manipulation, while also creating incentives for censorship and chain reorganization. Foundational work formalized the role of priority gas auctions and highlighted the consensus-layer risks introduced by transaction-order dependencies on decentralized exchanges~\cite{Daian2019FlashBoys}. Subsequent empirical analyses quantified the magnitude of MEV opportunities across arbitrage, liquidations, and sandwich trading~\cite{Qin2022DarkForest,Zhou2020HFTDEX}. On sequential blockchains, MEV arises primarily through ordering control (such as front-running, back-running, and sandwiching), censorship of rival or victim transactions, and opportunistic history revision through reorganization or time-bandit attacks~\cite{Carlsten2016Instability}. While MEV can occur both within and across blocks, the existing literature has largely focused on intra-block mechanisms. In all cases, MEV imposes externalities on both users and the underlying protocol. Private order flow channels and order flow auctions (OFAs) attempt to rebate value to users and reduce public mempool exposure, but also raise centralization concerns~\cite{FlashbotsOrderFlow2024,ConsensysOFA}. Orthogonal lines use fair‑ordering consensus to constrain how much advantage ordering control yields~\cite{Kelkar2020Aequitas,Kelkar2023Themis,Cachin2023QOF}. Another direction is encrypted mempools to hide payloads until commitment, reducing same tick informational edge~\cite{Choudhuri2024BatchedTE,ShutterDocs}.
\section{MEV in MCP}
MEV in sequential blockchains has been well studied in literature and a number of proposals have been made to limit MEV. In this work we isolate some MCP specific externalities arising from concurrent proposal and deterministic merge ordering that have no analogue in sequential single proposer chains. MEV in sequential blockchains can take many forms but is most frequently seen in the form of censorship, front-running, back-running, or sandwich attacks. These attacks are typically considered to take place within a single block, though some work has studied MEV across sequential blocks. Such attacks continue to exist in an MCP blockchain and the strategy space remains unchanged for both tx issuing agents and proposers attempting to extract MEV. Atop the more traditional scope of MEV, the use of an MCP protocol introduces several novel types of MEV which are either limited in scope or non-existent in sequential blockchains.

Though trivially present in a sequential blockchain it is worth briefly discussing censorship in the context of MCP. Once a transaction is received by a proposer it can arbitrarily choose whether to include any given transaction. If the proposer chooses not to include a transaction they lose the tip for that transaction but gain whatever expected utility censorship gave them. A Sei Giga style MCP protocol offers probabilistic censorship resistance however since any agent can issue any transaction multiple times for inclusion, therefore the proposer is acting in a state of incomplete information as the utility gained from censoring may be lost if another proposer has the transaction and includes it in the same tick of the system.

In a traditional sequential blockchain mempools exist and gossip transactions ensuring all proposers have access, at least ignoring private mempools, to the same set of transactions from which to build their block. Sei Giga has no mempool. This means a transaction, may, be known by only a single proposer who is able to auction off that transaction to another proposer who may have the ability to extract some MEV style side utility from said transaction.
Once a block is published on the DD layer its content is available for the public to see, if another proposer who sees the block sees a transaction they can extract utility from, either in expectation from their version of the tx being executed first giving them the tip, or via MEV, then they have an incentive to include that transaction in their own block.

Assume two users intend to bid on the same asset, but user 2 knows they have access to a faster block publisher. User 2 is able to wait for user 1 to submit their bid, for that block to be published on the DD layer before they submit their bid. In a sequential blockchain a similar game is possible at the mempool level but ensuring user 2's transaction is included first in the block requires a side payment to an arbitrary block builder, or a high tip and the hope that the tip acts as an ordering auction allowing that transaction to be included first. In MCP no such payments are required if it is known that the block will be included in the same tick of the system, after which ordering is deterministic.

The traditional MEV strategies from sequential models are extended in the MCP model to extend over multiple blocks in the same tick of the system. As such it is possible to attempt to perform an MEV game against a transaction in another block once it is public on the DD layer. 
Spam comes in many forms, but we focus on two channels in this work, namely, DoS-style use of the DD layer and attempts to gain access to the front of the merged block through spamming transactions.
\section{A model of MEV}
Notation used in this model is summarized in Appendix \ref{sec:not}. Let $F_x(\alpha)=A(1-e^{-k\alpha})$ be gross MEV accrual and $S_x(\alpha)=e^{-\lambda \alpha}$ the survival.
Then
\[
  \Delta_x(\alpha)=\int_{0}^{\alpha} A k e^{-k s}\,S_x(s)\,ds
  = \frac{A k}{k+\lambda}\Bigl(1-e^{-(k+\lambda)\alpha}\Bigr),
\]
and the per‑tx objective for delay $\alpha\ge 0$ is
\(
  U_{\rm mev}(\alpha,\tau_x)
  = \Delta_x(\alpha)\;+\; \tau_x\, e^{-\lambda \alpha}.
\)
We adopt a single hazard $\lambda>0$ that governs both the survival of an opportunity against pre-emption and the effective discounting of the tip. As such, the tip is realized only if $x$ survives until execution, hence the expected tip conditional on delay $\alpha$ is $\tau_x\,S_x(\alpha)=\tau_x e^{-\lambda \alpha}$. Additional time discounting over sub-second ticks ($\sim$300ms) is negligible relative to pre-emption risk, and inventory/credit frictions for proposers at this horizon are second order. The two hazard generalization with a distinct tip discount $\delta$ is given in Appendix~\ref{app:twohazard}, all main results here use the one hazard specialization $\delta=\lambda$.

Assume a set of proposers \(P = \{1, 2, \dots, n\}\). Let time be divided into discrete ticks \(T = 1, 2, \dots\). We define a transaction \(x\) as having a true bid \(f_x\) and a tip \(\tau_x\). For proposer \(i\) in tick \(t\), let \(X_i^t\) be the set of transactions it has received and let \(B_i^t \subseteq X_i^t\) be the (multi)set of transactions it actually includes in its PoA–certified blocks during tick \(t\).

We take a block-building baseline in which all transactions are included immediately:
\[
  u_{\text{build}}^{t,i}(B_i^t)
  \;=\;
  \sum_{x\in B_i^t} \tau_x.
\]

Upon receiving a transaction $x$, proposer $i$ chooses a delay $\alpha_x\ge 0$, where $\alpha_x=0$ is immediate inclusion and $\alpha_x>0$ harvests MEV. We allow a different delay for each transaction and collect these into a vector
\[
  \alpha^{t,i} \;=\; (\alpha_x)_{x\in B_i^t}.
\]

For a single transaction \(x\), the expected MEV from delaying by \(\alpha_x\) is
\[
  \Delta_x(\alpha_x)
  \;=\;
  \frac{A k}{k+\lambda}\Bigl(1-e^{-(k+\lambda)\alpha_x}\Bigr),
\]
and the expected tip conditional on survival is \(\tau_x e^{-\lambda \alpha_x}\). Thus the per‑transaction payoff from choosing delay \(\alpha_x\) is
\[
  u_x(\alpha_x)
  \;:=\;
  U_{\rm mev}(\alpha_x,\tau_x)
  \;=\;
  \Delta_x(\alpha_x) + \tau_x e^{-\lambda \alpha_x}.
\]

The side (MEV) component for proposer \(i\) in tick \(t\) is
\[
  u_{\text{mev}}^{t,i}(B_i^t,\alpha^{t,i})
  \;=\;
  \sum_{x\in B_i^t} \Delta_x(\alpha_x),
\]
while the realized tip component is
\[
  u_{\text{tip}}^{t,i}(B_i^t,\alpha^{t,i})
  \;=\;
  \sum_{x\in B_i^t} \tau_x e^{-\lambda \alpha_x}.
\]

We note that \(U_{\rm mev}(\alpha,\tau_x)\) is a per-transaction payoff combining both the MEV component \(\Delta_x(\alpha)\) and the hazard discounted tip \(\tau_x e^{-\lambda\alpha}\). At the block level we will sometimes refer to the pure-MEV component
\(
  u_{\rm mev}^{t,i}(B_i^t,\alpha^{t,i}) = \sum_{x\in B_i^t} \Delta_x(\alpha_x),
\)
which, unlike \(U_{\rm mev}\), does not include tips.

Putting these together, the total utility of proposer \(i\) in tick \(t\) is
\[
  U^{t,i}(B_i^t,\alpha^{t,i})
  \;=\;
  \sum_{x\in B_i^t} u_x(\alpha_x)
  \;=\;
  u_{\text{mev}}^{t,i}(B_i^t,\alpha^{t,i})
  \;+\;
  u_{\text{tip}}^{t,i}(B_i^t,\alpha^{t,i}).
\]
Immediate inclusion is $\alpha_x=0$, since $U_{\rm mev}(0,\tau_x)=\tau_x$.
To avoid clutter we will usually suppress the arguments and just write \(U^{t,i}\).
For a single transaction with tip \(\tau_x\), the proposer’s one-dimensional delay problem is
\[
  \max_{\alpha\ge 0} U_{\rm mev}(\alpha,\tau_x),
\]
and we define the per‑tx delay envelope
\[
  M(\tau)
  \;:=\;
  \sup_{\alpha\ge 0} U_{\rm mev}(\alpha,\tau),
\]
which depends on \(x\) only through its tip \(\tau_x\) and the MEV parameters \((A,k,\lambda)\).
\subsection{Censorship}
We will now model censorship in MCP from both the perspective of a transaction issuing agent, the user, and the proposer, following \cite{FoxPaiResnick2023CensorshipAuctions}. Upon receiving a transaction \(x\), the proposer faces four strategic options. It may include immediately, yielding utility \(u_{\rm inc}=\tau_x\), drop it entirely for a side payment \(\beta\), giving \(u_{\rm drop}=\beta\), delay inclusion to extract MEV over a time window \(\alpha\), obtaining
\(
  u_{\rm keep}(\alpha)
  \;:=\;
  U_{\rm mev}(\alpha,\tau_x)
  = \frac{A k}{k+\lambda}\bigl(1-e^{-(k+\lambda)\alpha}\bigr)
  + e^{-\lambda\alpha}\tau_x;
\)
or hold it for a proposer to proposer auction, where each of the \(m=|P|-1\) rival bidders \(j\) has valuation
\(
  V_j^x(\alpha)
  = \max\!\Bigl\{e^{-\lambda\alpha}\tau_x+\Delta_{x,j}(\alpha),\,\beta_j\Bigr\},
\)
and the seller’s expected revenue under the optimal auction with regular i.i.d.\ values \(F_\alpha\) is
\(
  u_{\rm auc}(\alpha)={\rm Rev}^*(\alpha),
\)
as established in Proposition~\ref{prop:myerson-iid} (in general \({\rm Rev}^*(\alpha)\neq \mathbb{E}[\max_j V_j^x(\alpha)]\)).

\begin{lemma}[Drop vs keep cutoff]\label{lem:drop-cutoff}
Let $U_{\mathrm{mev}}(\alpha,\tau)
= \frac{A k}{k+\lambda}\!\left(1-e^{-(k+\lambda)\alpha}\right)
+ \tau\, e^{-\lambda\alpha}$ and define
$M(\tau):=\sup_{\alpha\ge 0} U_{\mathrm{mev}}(\alpha,\tau)$.
Then $M(\tau)$ is continuous and strictly increasing with
$M(0)=\tfrac{A k}{k+\lambda}$ and $\lim_{\tau\to\infty} M(\tau)=\infty$.
Hence there exists a unique cutoff $\tau_d\ge 0$ solving $M(\tau_d)=\beta$.
The optimal proposer policy is
\[
a_i(x)=
\begin{cases}
\mathrm{DROP}, & \tau_x<\tau_d,\\[2mm]
\textsf{keep}(\alpha^\star(\tau_x)), & \tau_x\ge\tau_d,
\end{cases}
\]
where $\alpha^\star(\tau)$ equals $\tfrac{1}{k}\ln\!\tfrac{A k}{\lambda\tau}$ if $\tau< A k/\lambda$ and $0$ otherwise.
If $\beta\le M(0)=\tfrac{A k}{k+\lambda}$, then $\tau_d=0$ and no drop region exists.
\end{lemma}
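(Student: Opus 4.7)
The plan is to first characterize the inner maximizer $\alpha^\star(\tau)$ of $U_{\rm mev}(\cdot,\tau)$ via a first-order condition, then derive the three structural properties of the envelope $M$, and finally read off the proposer policy by comparing $M(\tau_x)$ with $\beta$. Differentiating gives $\partial_\alpha U_{\rm mev}(\alpha,\tau)=e^{-\lambda\alpha}(Ak\,e^{-k\alpha}-\lambda\tau)$, whose sign is that of the strictly decreasing factor $Ak\,e^{-k\alpha}-\lambda\tau$. If $\tau\geq Ak/\lambda$, this factor is non-positive on $[0,\infty)$, so $\alpha^\star(\tau)=0$ and $M(\tau)=\tau$; if $\tau<Ak/\lambda$, setting it to zero yields $\alpha^\star(\tau)=\tfrac{1}{k}\ln(Ak/(\lambda\tau))$, and the sign pattern (positive then negative) confirms this is the unique global maximizer.

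Next I would establish the three properties of $M$. For each fixed $\alpha$, the map $\tau\mapsto U_{\rm mev}(\alpha,\tau)$ is affine, so $M$ is a supremum of affine functions, hence convex and therefore continuous on $[0,\infty)$. The envelope theorem gives $M'(\tau)=e^{-\lambda\alpha^\star(\tau)}>0$, proving strict monotonicity. At $\tau=0$, $U_{\rm mev}(\alpha,0)=\tfrac{Ak}{k+\lambda}(1-e^{-(k+\lambda)\alpha})$ has supremum $\tfrac{Ak}{k+\lambda}$ as $\alpha\to\infty$, so $M(0)=\tfrac{Ak}{k+\lambda}$; and $M(\tau)\geq U_{\rm mev}(0,\tau)=\tau$ forces $M(\tau)\to\infty$. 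Existence and uniqueness of a $\tau_d>0$ with $M(\tau_d)=\beta$ when $\beta>M(0)$ then follow from the intermediate value theorem applied to a continuous, strictly increasing function; when $\beta\leq M(0)$, I set $\tau_d:=0$ by convention so that the strict inequality $\tau_x<\tau_d$ picks out an empty drop region, matching the "no drop region" assertion.

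For the policy, since immediate inclusion corresponds to the $\alpha=0$ instance of keep, with $U_{\rm mev}(0,\tau_x)=\tau_x\leq M(\tau_x)$, keeping with delay $\alpha^\star(\tau_x)$ weakly dominates immediate inclusion; the proposer's choice thus collapses to keep (value $M(\tau_x)$) versus drop (value $\beta$), which by strict monotonicity of $M$ reduces exactly to the threshold rule $\tau_x\gtreqless\tau_d$. The only real obstacle is mild bookkeeping around the boundary case $\beta\leq M(0)$: the equation $M(\tau_d)=\beta$ typically has no positive solution there, so one must explicitly invoke the $\tau_d=0$ convention and verify that every nonnegative tip already satisfies $M(\tau_x)\geq M(0)\geq\beta$, consistent with keep being optimal everywhere. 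All other steps are routine calculus plus IVT.
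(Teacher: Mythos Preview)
Your proposal is correct and follows essentially the same route as the paper: characterize $\alpha^\star(\tau)$ via the first-order condition, establish $M(0)=\tfrac{Ak}{k+\lambda}$, strict monotonicity, and $M(\tau)\to\infty$, then invoke the intermediate value theorem and read off the threshold policy (with the $\tau_d=0$ convention when $\beta\le M(0)$). The only cosmetic differences are that you obtain continuity via convexity (supremum of affine maps) and monotonicity via the envelope theorem $M'(\tau)=e^{-\lambda\alpha^\star(\tau)}$, whereas the paper argues monotonicity by a direct $\varepsilon$-comparison $U_{\rm mev}(\alpha_\varepsilon,\tau_2)>U_{\rm mev}(\alpha_\varepsilon,\tau_1)$ and defers the maximizer analysis to its Lemma~\ref{lem:max}.
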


\begin{lemma}[Closed form of $M(\tau)$]
Let $r:=\frac{\lambda\tau}{A k}$ with $k,\lambda>0$. Then
\[
M(\tau)=
\begin{cases}
\displaystyle
\frac{A k}{k+\lambda}\Bigl(1-r^{\,1+\lambda/k}\Bigr)+\tau\, r^{\,\lambda/k}, & 0<r<1,\\[0.8ex]
\displaystyle \max\!\left\{\tau,\ \frac{A k}{k+\lambda}\right\}, & \text{otherwise.}
\end{cases}
\]
In particular, $M(0)=\frac{A k}{k+\lambda}$ and $M(\tau)$ is strictly increasing in $\tau$.
\end{lemma}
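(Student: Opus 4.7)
The plan is to optimize $U_{\rm mev}(\alpha,\tau)$ over $\alpha\ge 0$ by direct calculus, since the partial derivative factors cleanly. I would first compute
\[
\partial_\alpha U_{\rm mev}(\alpha,\tau) \;=\; Ak\, e^{-(k+\lambda)\alpha} - \lambda\tau\, e^{-\lambda\alpha} \;=\; e^{-\lambda\alpha}\bigl(Ak\, e^{-k\alpha} - \lambda\tau\bigr),
\]
so interior critical points satisfy $e^{-k\alpha}=r$ with $r=\lambda\tau/(Ak)$. Because $Ak\,e^{-k\alpha}-\lambda\tau$ is strictly decreasing in $\alpha$, the sign of $\partial_\alpha U_{\rm mev}$ flips at most once, which makes the case split on $r$ immediate.

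Next I would handle the two regimes. If $r\ge 1$, the bracket is already $\le 0$ at $\alpha=0$, so $U_{\rm mev}(\cdot,\tau)$ is non-increasing on $[0,\infty)$ and the supremum is attained at $\alpha=0$, yielding $\tau$; since $r\ge 1$ implies $\tau\ge Ak/\lambda>Ak/(k+\lambda)$, this equals $\max\{\tau,Ak/(k+\lambda)\}$ as claimed. If $0<r<1$, the derivative is positive on $[0,\alpha^\star)$ and negative on $(\alpha^\star,\infty)$ with $\alpha^\star=\tfrac{1}{k}\ln(1/r)$, so $\alpha^\star$ is the unique global maximizer; substituting $e^{-k\alpha^\star}=r$ and $e^{-\lambda\alpha^\star}=r^{\lambda/k}$ into $U_{\rm mev}$ gives the stated closed form directly. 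The boundary $\tau=0$ (hence $r=0$) slots into the \emph{otherwise} branch: $U_{\rm mev}(\alpha,0)$ increases monotonically to the limit $Ak/(k+\lambda)$ as $\alpha\to\infty$, matching $\max\{0,Ak/(k+\lambda)\}$. Continuity at the seam $r=1$ follows by checking that the interior formula collapses to $\tau$ when $r=1$, agreeing with the \emph{otherwise} branch at $\tau=Ak/\lambda$.

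For strict monotonicity I would invoke the envelope theorem on the interior branch: $M'(\tau)=\partial_\tau U_{\rm mev}(\alpha^\star(\tau),\tau)=e^{-\lambda\alpha^\star(\tau)}=r^{\lambda/k}>0$ for $0<r<1$, while on the boundary branch $M(\tau)=\tau$ is trivially strictly increasing, and the two slopes agree one-sidedly at $r=1$ (both equal $1$). There is no real technical obstacle here — it is a one-variable smooth optimization with an explicit critical point — the only points that warrant care are the non-attainment of the supremum at $\tau=0$ (it is a limit as $\alpha\to\infty$ rather than a maximum) and the verification that the two branches continuously and monotonically patch together at the threshold $\tau=Ak/\lambda$.
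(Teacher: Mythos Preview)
Your proposal is correct and follows essentially the same route as the paper: the paper does not give a standalone proof of this lemma but obtains the maximizer via the identical derivative factorization (Lemma~\ref{lem:max}), after which the closed form is recovered by the same substitution $e^{-k\alpha^\star}=r$, $e^{-\lambda\alpha^\star}=r^{\lambda/k}$. The only minor difference is that the paper establishes strict monotonicity of $M(\tau)$ (in the proof of Lemma~\ref{lem:drop-cutoff}) by the pointwise comparison $U_{\rm mev}(\alpha,\tau_2)-U_{\rm mev}(\alpha,\tau_1)=(\tau_2-\tau_1)e^{-\lambda\alpha}>0$ rather than via the envelope theorem, but the two arguments are equivalent in this setting.
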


Because $U_{\rm mev}(\alpha,\tau)\to \tfrac{A k}{k+\lambda}$ as $\alpha\to\infty$ and $U_{\rm mev}(0,\tau)=\tau$, and since $U_{\rm mev}$ is unimodal in $\alpha$ (Lemma~\ref{lem:max}), the maximizer is either the interior point $\alpha^\star(\tau)=\tfrac{1}{k}\ln\!\tfrac{A k}{\lambda\tau}$ when $\tau< A k/\lambda$, or the boundary $\alpha=0$ when $\tau\ge A k/\lambda$. Thus the immediate-inclusion threshold is
\[
  \tau^\dagger \;=\; \frac{A k}{\lambda},
\]
which separates delay ($\tau<\tau^\dagger$) from immediate inclusion ($\tau\ge\tau^\dagger$).

\begin{lemma}[Maximizer]\label{lem:max}
Let $\alpha^\star(\tau)$ denote any maximizer of $U_{\rm mev}(\alpha,\tau)$. With $U_{\rm mev}(\alpha,\tau)=\frac{A k}{k+\lambda}\!\left(1-e^{-(k+\lambda)\alpha}\right)+\tau e^{-\lambda\alpha}$ and $k,\lambda>0$,
\[
  \frac{d}{d\alpha}U_{\rm mev}(\alpha,\tau)
  \;=\; A k\, e^{-(k+\lambda)\alpha} \;-\; \lambda\,\tau\, e^{-\lambda \alpha}.
\]
Let $r:=\frac{\lambda\tau}{A k}$. Then:
\begin{itemize}
\item If $0<r<1$ (i.e.\ $\tau< A k/\lambda$), there is a unique interior maximizer
\[
  \alpha^\star(\tau) \;=\; \frac{1}{k}\,\ln\!\frac{1}{r}
  \;=\; \frac{1}{k}\,\ln\!\frac{A k}{\lambda\tau}\;>\;0.
\]
\item If $r\ge 1$ (i.e.\ $\tau\ge A k/\lambda$), the global maximizer is the boundary $\alpha^\star(\tau)=0$ (immediate inclusion).
\item If $\tau=0$ ($r=0$), the maximizer is attained in the limit $\alpha^\star\to+\infty$ with value $\frac{A k}{k+\lambda}$.
\end{itemize}
\end{lemma}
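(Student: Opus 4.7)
The plan is to take the derivative expression already displayed in the lemma statement as the starting point, locate its zeros, and then use a single sign analysis to classify the maximizer across the three regimes. Because $U_{\rm mev}(\cdot,\tau)$ is continuously differentiable on $[0,\infty)$ and bounded above by $\tau + \tfrac{Ak}{k+\lambda}$, any supremum is either attained at an interior critical point, at the boundary $\alpha=0$, or approached in the limit $\alpha\to\infty$; showing which alternative is active for each $\tau$ is the whole content of the lemma.

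First I would solve the first-order condition. Setting $Ak e^{-(k+\lambda)\alpha} = \lambda\tau e^{-\lambda\alpha}$ and dividing through by $e^{-\lambda\alpha}$ reduces the equation, for $\tau>0$, to $e^{-k\alpha}=\lambda\tau/(Ak) = r$. This has the unique solution $\alpha^\star(\tau) = (1/k)\ln(1/r)$, which is strictly positive exactly when $r<1$ and collapses to the boundary $\alpha=0$ when $r=1$. That already yields the interior formula claimed in the first bullet.

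Next I would confirm this critical point is actually a maximum and simultaneously dispose of the $r\ge 1$ case. Factoring the derivative as $e^{-\lambda\alpha}\bigl(Ak\,e^{-k\alpha}-\lambda\tau\bigr)$, the leading exponential is strictly positive while the parenthesised term is strictly decreasing in $\alpha$, starting at $Ak(1-r)$ at $\alpha=0$ and tending to $-\lambda\tau$. Hence the derivative has at most one zero and changes sign from positive to negative there. For $r<1$ the crossing is at $\alpha^\star(\tau)$, so $U_{\rm mev}(\cdot,\tau)$ is strictly increasing on $[0,\alpha^\star)$ and strictly decreasing on $(\alpha^\star,\infty)$, giving the unique interior global maximum. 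For $r\ge 1$ the derivative is non-positive throughout $[0,\infty)$, so $U_{\rm mev}(\cdot,\tau)$ is non-increasing and the boundary $\alpha=0$ maximizes it, with value $\tau$.

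Finally, the degenerate case $\tau=0$ is immediate: $U_{\rm mev}(\alpha,0) = \tfrac{Ak}{k+\lambda}(1-e^{-(k+\lambda)\alpha})$ is strictly increasing with supremum $\tfrac{Ak}{k+\lambda}$, never attained, so $\alpha^\star\to+\infty$ as stated. There is no real obstacle in the argument; the only delicate point is bookkeeping around the boundary $r=1$, where the interior candidate coincides with $\alpha=0$ and both branches of the formula return the same value $\tau = Ak/\lambda$, so strict versus weak inequalities have to be written carefully to avoid double-counting that case.
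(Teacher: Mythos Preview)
Your proof is correct and follows essentially the same approach as the paper: both factor the derivative as $e^{-\lambda\alpha}\bigl(Ak\,e^{-k\alpha}-\lambda\tau\bigr)$, use the strict monotonicity of the inner factor to locate the unique sign change, and then read off the three regimes $0<r<1$, $r\ge 1$, and $\tau=0$ in the same way.
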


\begin{lemma}[Symmetric cutoff]\label{lem:SNE}
If every proposer uses the cutoff strategy
\[
a_i(x)=
\begin{cases}
\text{\textsc{DROP}},&\tau_x<\tau_d,\\
\text{choose }\alpha^\star(\tau_x)\text{ from Lemma~\ref{lem:max}},&\tau_x\ge\tau_d,
\end{cases}
\]
with $\tau_d$ from Lemma~\ref{lem:drop-cutoff}, then no single proposer can profitably deviate, this is a best response.
\end{lemma}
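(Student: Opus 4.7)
The plan is to reduce the symmetric equilibrium check to the single-transaction optimization already solved by Lemma~\ref{lem:drop-cutoff}. The block-level utility decomposes as $U^{t,i}=\sum_{x\in B_i^t} u_x(\alpha_x)$, and in the censorship game as modeled each per-tx summand depends only on proposer $i$'s own action on $x$ (either DROP yielding $\beta$, or keep with delay $\alpha_x\ge 0$ yielding $U_{\rm mev}(\alpha_x,\tau_x)$) together with the transaction's tip $\tau_x$ and the fixed parameters $(A,k,\lambda)$. In particular, no term in $U^{t,i}$ depends on the action chosen by any other proposer $j\neq i$ on any transaction, so the best-response problem separates across transactions and across proposers.

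First I would fix opponents at the prescribed cutoff and write $i$'s best achievable per-tx payoff as $\max\{\beta,\sup_{\alpha\ge 0}U_{\rm mev}(\alpha,\tau_x)\}=\max\{\beta,M(\tau_x)\}$, where the supremum is attained at $\alpha^\star(\tau_x)$ by Lemma~\ref{lem:max}. Next I would invoke the strict monotonicity of $M$ from Lemma~\ref{lem:drop-cutoff} together with $M(\tau_d)=\beta$: then $M(\tau_x)<\beta$ iff $\tau_x<\tau_d$, so DROP dominates keep precisely on $\{\tau_x<\tau_d\}$ and keep with delay $\alpha^\star(\tau_x)$ dominates DROP precisely on $\{\tau_x\ge\tau_d\}$. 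These are exactly the prescribed cutoff choices, so $i$ cannot strictly improve on any transaction $x\in X_i^t$. Summing the pointwise inequality over $B_i^t$ yields that no single-proposer deviation raises $U^{t,i}$, which is the best-response claim.

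The main obstacle is conceptual rather than technical: making explicit that in this restricted censorship setting no cross-proposer coupling enters $i$'s payoff, so that fixing the opponents at the symmetric cutoff is substantive only for pinning down $\tau_d$ (via $\beta$) and not for the shape of $i$'s own per-tx objective. Duplicate receipt, proposer-to-proposer auctions, and PoA timing races are handled in later sections and would introduce coupling terms; but in the present lemma $U^{t,i}$ is a sum of independent scalar problems, and once that is noted the result is an immediate corollary of Lemma~\ref{lem:drop-cutoff}. I would close by observing that symmetry across proposers is preserved under this pointwise argument, so the profile in which every proposer plays the cutoff is a symmetric Nash equilibrium.
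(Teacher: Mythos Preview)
Your proposal is correct and follows essentially the same approach as the paper's own proof: both exploit the separability of $U^{t,i}$ across transactions and the absence of cross-proposer coupling in this setting, then reduce to the per-transaction comparison $\max\{\beta,M(\tau_x)\}$ and invoke the strict monotonicity of $M$ together with $M(\tau_d)=\beta$ from Lemma~\ref{lem:drop-cutoff} and the characterization of $\alpha^\star(\tau_x)$ from Lemma~\ref{lem:max}. Your added remark about which later mechanisms (duplication, auctions, PoA races) would introduce coupling is a nice clarification but not needed for the argument.
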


Users are impatient: utility is \(v>0\) if included in this tick and \(0\) otherwise.
Let the contacted proposer set be \(P_s\subseteq P\) with size \(s=|P_s|\).
For \(i\in P_s\), let censor probability be \(p_x^i\in[0,1)\). One broadcast round to all of \(P_s\) yields inclusion
probability \(1-\prod_{i\in P_s}p_x^i\). If the user repeats this broadcast to the same \(P_s\) for \(k\in\mathbb N_0\) rounds,
with per-submission cost \(c>0\), then with \(\Pi:=\prod_{i\in P_s}p_x^i\),
\[
  U^{(k)} \;=\; v\bigl(1-\Pi^k\bigr) - k\,s\,c.
\]
\begin{lemma}\label{lem:five}
Let \(Q\in[0,1)\) and \(s=|P_s|\). For \(k\in\mathbb N_0\), write
\[
  U^{(k)} \;=\; v\bigl(1-Q^k\bigr) - k\,s\,c.
\]
If \(v(1-Q)\le s\,c\), then \(\arg\max_{k\in\mathbb N_0} U^{(k)} = \{0\}\).
If \(v(1-Q)>s\,c\), define
\[
  k^* \;=\; \max\{k\in\mathbb N_0: v\,Q^k(1-Q) > s\,c\}.
\]
Then \(\arg\max_{k\in\mathbb N_0} U^{(k)} = \{k^*+1\}\).
\end{lemma}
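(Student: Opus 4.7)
The plan is to exploit the discrete unimodality of \(U^{(k)}\) via its first differences. I would set
\[
  D(k) \;:=\; U^{(k+1)} - U^{(k)} \;=\; v\,Q^k(1-Q) - s\,c,
\]
and note that since \(Q \in [0,1)\), the sequence \(Q^k\) is non-increasing (strictly decreasing when \(Q > 0\)), so \(D\) is non-increasing with \(\lim_{k\to\infty} D(k) = -sc < 0\). Hence \(U^{(k)}\) is discretely concave: the sign of \(D\) flips at most once, from positive to non-positive, and the maximizer of \(U\) sits at the first index at which the increment ceases to be positive.

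Next I would split on the sign of \(D(0) = v(1-Q) - sc\). If \(v(1-Q) \le sc\), then \(D(0) \le 0\) and \(D(k) < 0\) for every \(k \ge 1\) (since \(Q^k < Q^0 = 1\) when \(Q > 0\), while \(D(k) = -sc < 0\) when \(Q = 0\)); telescoping yields \(U^{(k)} < U^{(0)}\) for every \(k \ge 1\), so \(\arg\max = \{0\}\). If instead \(v(1-Q) > sc\), then \(D(0) > 0\), and since \(D\) is non-increasing with a strictly negative limit there is a largest integer \(k^*\) with \(D(k^*) > 0\), exactly the \(k^*\) of the statement. By monotonicity we then have \(D(k) > 0\) for \(0 \le k \le k^*\) and \(D(k) < 0\) for \(k \ge k^* + 1\); telescoping gives \(U^{(0)} < U^{(1)} < \dots < U^{(k^* + 1)}\) and \(U^{(k^* + 1)} > U^{(k^* + 2)} > \dots\), so \(\arg\max = \{k^* + 1\}\).

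The main obstacle is the off-by-one between \(k^*\) (the last index at which the forward difference is strictly positive) and the argmax \(k^* + 1\) (the first index after those positive increments are exhausted); it is controlled by keeping the convention \(D(k) = U^{(k+1)} - U^{(k)}\) straight and by exploiting the strict inequality in the definition of \(k^*\). The degenerate case \(Q = 0\) only needs a sanity check: there \(D(k) = -sc < 0\) for all \(k \ge 1\), so \(k^* = 0\) whenever \(v > sc\) and \(\arg\max = \{1\}\), in agreement with the general formula.
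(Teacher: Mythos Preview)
Your proposal is correct and follows essentially the same route as the paper: both define the forward difference \(D(k)=U^{(k+1)}-U^{(k)}=vQ^k(1-Q)-sc\), use that \(Q^k\) is (strictly) decreasing to obtain unimodality of \(U^{(k)}\), and split on the sign of \(D(0)\) to locate the argmax at \(0\) or \(k^*+1\). Your treatment is in fact slightly more careful than the paper's (explicit handling of \(Q=0\), correct index range \(0\le k\le k^*\) for positive increments), though both arguments silently ignore the measure-zero tie case \(D(k^*+1)=0\) (resp.\ \(D(0)=0\)) in asserting uniqueness of the maximizer.
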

Thus we can trivially see that despite the risk of censorship, as in sequential chains, the user in an MCP chain frequently has a best response that includes submitting the transaction multiple times for probabilistic censorship resistance, though in the real world the user is unlikely to know the real probability of censorship and must act on a simpler heuristic closer to \(k = \lfloor \frac{v}{c} \rfloor\).
\subsection{Tx stealing}
We formalize the decision of a rival proposer $j\neq i$ to steal a transaction $x$ that first appears publicly inside proposer $i$'s DA-published block during tick $t$.

Let $\sigma_i^t(x)\in[0,1]$ denote the stealability of $x$ against $i$ within the same tick, the probability that (i) $j$ observes $x$ early enough to get a valid PoA for a block containing a duplicate of $x$ in tick $t$, and (ii) the canonical inter-block ordering executes $j$'s duplicate of $x$ before $i$'s copy. Let $\rho_i^t(x)\in[0,1]$ denote the probability that $i$'s block misses tick $t$, while $j$ can still include $x$ in $t$ after seeing it, in that case the duplicate from $j$ executes and $j$ receives $\tau_x$ with probability $\rho_i^t(x)$.

Including a duplicate of $x$ costs proposer $j$ the shadow value of a marginal slot in $j$'s block during tick $t$, denoted $\phi_j^t\ge0$. If ordering before $i$ also enables an additional inter-block MEV gain on $x$, denote that incremental advantage by $\delta_x\ge0$.

\begin{proposition}[Steal threshold]\label{prop:steal-threshold}
Fix a transaction \(x\) first revealed by proposer \(i\) in tick \(t\).
Let \(\sigma_i^t(x)\in[0,1]\) be the probability that both \(i\) and a rival \(j\) make tick \(t\) but \(j\)'s duplicate executes before \(i\)'s;
let \(\rho_i^t(x)\in[0,1]\) be the probability that \(i\) misses tick \(t\) while \(j\) still includes \(x\) in \(t\).
These events are disjoint by construction. Let \(\phi_j^t\ge0\) be \(j\)'s shadow blockspace value and \(\delta_x\ge0\) a same-tick inter-block MEV increment.
Then \(j\) has a (weakly) profitable steal iff
\[
  \bigl(\sigma_i^t(x)+\rho_i^t(x)\bigr)\,\bigl(\tau_x+\delta_x\bigr) \;>\; \phi_j^t,
\]
equivalently \( \tau_x > \phi_j^t/(\sigma_i^t(x)+\rho_i^t(x)) - \delta_x\).
\end{proposition}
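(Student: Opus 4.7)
The proof is essentially a direct expected-utility comparison, so the plan is to isolate the two disjoint events under which $j$'s duplicate successfully earns the tip and MEV increment, collect the revenue contributions they generate, and subtract the shadow blockspace cost. Concretely, I would first write down $j$'s incremental payoff from the action \emph{include a duplicate of $x$} relative to the counterfactual of \emph{do nothing}, so that all terms unrelated to $x$ cancel and only the marginal revenue and marginal cost of inserting the duplicate remain.

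Next, I would account for when $j$'s duplicate actually executes first. By hypothesis there are exactly two such scenarios: the same-tick race in which both $i$ and $j$ land in tick $t$ but $j$'s copy precedes $i$'s in the canonical ordering, and the scenario in which $i$ fails to land in $t$ while $j$ still does. These have probabilities $\sigma_i^t(x)$ and $\rho_i^t(x)$ and are disjoint by construction, so the probability that $j$'s duplicate is the executed copy is $\sigma_i^t(x)+\rho_i^t(x)$. In either scenario $j$ captures the full posted tip $\tau_x$, and by the definition of $\delta_x$ as the incremental same-tick inter-block MEV from ordering before $i$, $j$ additionally realizes $\delta_x$ in both events (noting that when $i$ misses entirely, ordering before $i$ is trivially satisfied). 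This gives expected marginal revenue $(\sigma_i^t(x)+\rho_i^t(x))(\tau_x+\delta_x)$.

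On the cost side, including a duplicate of $x$ occupies one marginal slot in $j$'s tick-$t$ block, whose opportunity value is exactly the shadow price $\phi_j^t$ by definition. This cost is incurred whenever $j$ includes the duplicate, independently of whether the duplicate ultimately executes (which is the standard way the shadow value is defined, and is why $\phi_j^t$ appears without a probability weight). Subtracting yields the expected net gain
\[
\bigl(\sigma_i^t(x)+\rho_i^t(x)\bigr)(\tau_x+\delta_x) - \phi_j^t,
\]
and the steal is (weakly) profitable exactly when this is positive. Rearranging for $\tau_x$ gives the equivalent form in the statement.

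The only nontrivial step is the revenue accounting, and the main subtlety I expect to have to spell out is the claim that $\delta_x$ accrues in the $\rho$-event as well as the $\sigma$-event, together with the assertion that the two events exhaust the outcomes in which $j$'s duplicate is the executed copy. Disjointness follows from the definitions ($\sigma$ conditions on $i$ making the tick, $\rho$ conditions on $i$ missing it), and exhaustiveness follows because if neither event occurs then either $j$ fails to land in tick $t$ or $i$'s copy orders first, in both of which cases $j$'s duplicate contributes zero revenue. Once that observation is recorded, the remainder is algebraic rearrangement.
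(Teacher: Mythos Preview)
Your proposal is correct and follows essentially the same expected-value computation as the paper: identify the two disjoint success events with probabilities $\sigma_i^t(x)$ and $\rho_i^t(x)$, attach the prize $\tau_x+\delta_x$ to their union, subtract the unconditional slot cost $\phi_j^t$, and rearrange. Your explicit remark on exhaustiveness (that outside these two events $j$'s duplicate earns nothing) and on $\delta_x$ accruing in the $\rho$-event is a slight elaboration beyond the paper's version, but the argument is the same.
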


If multiple rivals may steal, the game becomes an all-pay contest for the unique prize $\tau_x+\delta_x$ \cite{baye1996allpay,tullock1980efficient}.

\begin{lemma}[Symmetric mixed equilibrium of steal attempts]\label{lem:steal-mixed}
Let there be $m\ge1$ potential thieves with identical parameters $(\sigma,\rho,\phi)$ and prize $R:=\tau_x+\delta_x$. Suppose each thief who attempts pays cost $\phi$ (consumes a slot) and, if $n\ge1$ thieves attempt, the winner is selected uniformly from the $n$ attempters (others' duplicates are pruned). Then there exists a unique symmetric mixed equilibrium \cite{baye1996allpay} in which each thief attempts with probability $p^*\in[0,1]$ characterized by
\(
  g(p^*) \;=\; 
  \sum_{n=0}^{m-1}\binom{m-1}{n} (p^*)^{n} (1-p^*)^{m-1-n}
  \left(\frac{(\sigma+\rho)\,R}{n+1}-\phi\right)
  \;=\; 0.
\)
Moreover, $p^*$ is increasing in $R$ and in $(\sigma+\rho)$, and decreasing in $\phi$ and $m$. 
\end{lemma}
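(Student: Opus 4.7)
The plan is to reduce the equilibrium condition to the single equation $g(p^*)=0$, establish existence and uniqueness via strict monotonicity of $g$, and then read off the comparative statics from the implicit function theorem together with a stochastic dominance argument for the discrete parameter $m$.

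First I would write down the expected payoff to a thief who attempts when each rival independently attempts with probability $p$. Conditioning on $n$ of the $m-1$ rivals also attempting, this thief wins the uniform lottery with probability $1/(n+1)$ and then collects the prize $R=\tau_x+\delta_x$ with effective execution probability $\sigma+\rho$, while paying the slot cost $\phi$ regardless of the outcome. Taking the $\mathrm{Binomial}(m-1,p)$ expectation produces $g(p)$ exactly as written in the statement, whereas not attempting yields $0$. Symmetric best response therefore forces $g(p^*)=0$ at any interior mixed equilibrium, with $p^*=0$ when $g(0)\le 0$ and $p^*=1$ when $g(1)\ge 0$.

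To pin down existence and uniqueness I would rewrite $g$ in closed form using the identity $\mathbb{E}[1/(N+1)]=(1-(1-p)^m)/(mp)$ for $N\sim\mathrm{Binomial}(m-1,p)$, which follows from $\binom{m-1}{n}/(n+1)=\binom{m}{n+1}/m$ and the binomial theorem. This gives $g(p)=(\sigma+\rho)R\,\tfrac{1-(1-p)^m}{mp}-\phi$, with $g(0^+)=(\sigma+\rho)R-\phi$ and $g(1)=(\sigma+\rho)R/m-\phi$. Strict monotonicity of $g$ in $p$ follows because $1/(N+1)$ is strictly decreasing in $N$ and $N$ is stochastically increasing in $p$; hence $g$ has at most one root on $[0,1]$, and the intermediate value theorem supplies the interior root whenever $g(0)$ and $g(1)$ straddle zero. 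The two boundary cases cover the remaining parameter region, so the symmetric equilibrium $p^*\in[0,1]$ is unique.

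Finally, for the comparative statics I would appeal to the implicit function theorem: since $\partial g/\partial p<0$, we have $\mathrm{sign}(dp^*/dx)=\mathrm{sign}(\partial g/\partial x)$, and direct inspection of the closed form gives $\partial g/\partial R>0$, $\partial g/\partial(\sigma+\rho)>0$, and $\partial g/\partial\phi=-1<0$. For the discrete parameter $m$ I would compare $g(p;m)$ with $g(p;m+1)$: adding one more rival replaces $N_m$ by $N_m+B$ with $B\sim\mathrm{Bernoulli}(p)$ independent, so $N_{m+1}$ stochastically dominates $N_m$, giving $\mathbb{E}[1/(N_{m+1}+1)]\le\mathbb{E}[1/(N_m+1)]$ and hence $g(\,\cdot\,;m+1)\le g(\,\cdot\,;m)$, so $p^*(m+1)\le p^*(m)$. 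The main obstacle is really only the strict monotonicity of $g$ on $[0,1]$, and the $1/(N+1)$ identity (or equivalently the stochastic dominance observation) handles it cleanly.
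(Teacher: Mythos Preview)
Your proposal is correct and follows essentially the same route as the paper: identify $g(p)$ as the expected payoff to attempting when rivals mix at $p$, use stochastic monotonicity of $N\sim\mathrm{Bin}(m-1,p)$ together with $n\mapsto 1/(n+1)$ decreasing to get $g$ strictly decreasing, pin down the endpoints $g(0)=(\sigma+\rho)R-\phi$ and $g(1)=(\sigma+\rho)R/m-\phi$, and then read off existence, uniqueness, and the comparative statics. Your closed-form identity $\mathbb E[1/(N+1)]=(1-(1-p)^m)/(mp)$ and the explicit Bernoulli-coupling argument for $m$ are pleasant extras that the paper omits, but they do not change the underlying strategy.
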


\begin{corollary}[Anti-steal via equal tip-splitting]\label{cor:equal-split}
If duplicates of \(x\) that appear in tick \(t\) (the original plus \(n\) thieves) split the tip equally,
the per-copy prize is \(R_{n}=\tau_x/(n+1)\). For any \(\phi>0\) there exists \(\bar n\) such that
for all \(n\ge\bar n\), \(R_n<\phi\) and stealing is strictly unprofitable.
\end{corollary}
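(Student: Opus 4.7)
The plan is to reduce the claim to a direct application of the steal threshold in Proposition~\ref{prop:steal-threshold} with the per-copy prize adjusted for equal tip-splitting. There is no deep obstacle; the content is essentially that $R_n\to 0$, so the cost $\phi>0$ eventually dominates.

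First, I would record the definition: if the original plus $n$ thieves all appear in tick $t$ and the tip is split equally among the $n+1$ duplicates, each executed copy earns $R_n:=\tau_x/(n+1)$. This is monotonically decreasing in $n$ with $R_n\downarrow 0$, so for any fixed $\phi>0$ the choice
\[
  \bar n \;:=\; \left\lceil \tau_x/\phi \right\rceil
\]
guarantees $R_n<\phi$ for every $n\ge \bar n$, which proves the first half of the claim.

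Second, I would plug the reduced prize into the threshold of Proposition~\ref{prop:steal-threshold}. A rival $j$'s (weak) profitability condition, rewritten with the tip component replaced by its split share and setting the inter-block increment aside (see below), becomes $(\sigma_i^t(x)+\rho_i^t(x))\,R_n>\phi$. Since $\sigma_i^t(x)+\rho_i^t(x)\in[0,1]$ by construction (the two events are disjoint) and $R_n<\phi$ for $n\ge\bar n$, the left-hand side is strictly less than $\phi$. Thus the inequality fails and stealing is strictly unprofitable in equilibrium; by the monotone comparative statics of Lemma~\ref{lem:steal-mixed} (which gives $p^\star$ decreasing in $\phi/R$), the symmetric mixed attempt probability collapses to $p^\star=0$ once $R_n<\phi$.

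The only subtle point, and the one I would flag explicitly, is the treatment of the inter-block MEV increment $\delta_x$. The equal-split rule collapses only the duplication prize $\tau_x$, not any ordering-derived side value. So the clean statement is that the MCP-specific \emph{tip-driven} steal channel is neutralized; if $\delta_x>0$ remains after ordering rules, one needs the additional condition $(\sigma+\rho)\,\delta_x\le \phi$, which is the domain of the deterministic priority-DAG scheduler rather than of tip-splitting. Modulo this scope clarification, the corollary follows immediately from $R_n<\phi$ and the threshold in Proposition~\ref{prop:steal-threshold}.
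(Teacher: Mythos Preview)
Your proof is correct and follows essentially the same route as the paper, which simply states that the result is ``immediate from $R_n=\tau_x/(n+1)$ and the single agent threshold in Proposition~\ref{prop:steal-threshold}.'' You merely make explicit what the paper leaves implicit (the choice of $\bar n$, the bound $\sigma+\rho\le 1$, and the $\delta_x$ caveat that the paper records in the text immediately after the corollary).
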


In sequential chains the analogous strategy requires being (or bribing) the sole block builder for the height. MCP exposes $x$ on DD before final ordering, creating a same-tick duplicate race that is weak or absent in sequential settings. If same-tick incremental MEV $\delta_x$ exists, the conclusion holds either when $\delta_x=0$ or when the mechanism splits $\delta_x$ proportionally across duplicates. Let $m_x$ denote the multiplicity of a logical transaction $x$ across $\mathcal B_t$ in tick $t$ (counting all PoA certified copies revealed before the tick boundary). Execute only one copy of $x$, but split the posted tip evenly across all $m_x$ in‑tick copies, each proposer of a copy receives $\tau_x/m_x$. This preserves correct state while implementing the equal‑split payoff used in Corollary~\ref{cor:equal-split}.

\subsection{Tx auctions}
A proposer $i$ who privately observes $x$ (no global mempool) may sell access to $x$ before DD reveal. There are $m\ge1$ bidders $j\in B=P\setminus\{i\}$. Fix delay $\alpha\ge0$. Each bidder’s payoff is
\[
V_j^x(\alpha) \;=\; e^{-\lambda\alpha}\,\tau_x \;+\; \Delta_{x}(\alpha),
\]
where $\Delta_{x}(\alpha)$ is the MEV accrual from with parameters $(A,k,\lambda)$. We assume i.i.d.\ regular values, $V_j^x(\alpha)\sim F_\alpha$ with density $f_\alpha$ on $[0,\bar v_\alpha]$ and increasing virtual value $\phi_\alpha(v)=v-\frac{1-F_\alpha(v)}{f_\alpha(v)}$.

\begin{proposition}[Myerson reserve under i.i.d.\ values]
\label{prop:myerson-iid}
Under the i.i.d.\ regular assumption, the revenue maximizing one-shot auction at delay~$\alpha$
uses reserve~$r^{\ast}(\alpha)$ satisfying $\phi_\alpha(r^{\ast}) = 0$.
The expected revenue equals
\[
{\rm Rev}^{\ast}(\alpha)
  = \mathbb{E}\!\left[\max\bigl\{\phi_\alpha(V^{\ast}),\,0\bigr\}\right],
  \qquad
  V^{\ast} = \max_{j\in B} V_j^x(\alpha).
\]
\end{proposition}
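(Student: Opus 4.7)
The plan is to invoke Myerson's classical optimal auction theorem at the fixed delay $\alpha$ and read off both the reserve characterization and the revenue formula as direct specializations. Because the proposition already stipulates i.i.d.\ values with an increasing virtual value $\phi_\alpha$ on $[0,\bar v_\alpha]$, no extra regularity work is needed and the argument reduces to essentially bookkeeping.

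First I would apply Myerson's revenue lemma: for any Bayesian incentive compatible and individually rational one-shot mechanism at fixed $\alpha$ with binding boundary $U_j(0)=0$, expected revenue equals expected virtual surplus
\[
\mathbb{E}\!\left[\sum_{j\in B} x_j(V)\,\phi_\alpha(V_j^x(\alpha))\right],
\]
where $x_j$ is the interim allocation and $V=(V_j^x(\alpha))_{j\in B}$. Next, because $\phi_\alpha$ is monotone, the integrand is pointwise maximized by allocating the single unit to the bidder with the largest virtual value whenever that value is nonnegative, and withholding otherwise. Since $\phi_\alpha$ is strictly increasing and the $V_j^x(\alpha)$ are i.i.d., the bidder with the largest virtual value coincides with the bidder with the largest actual value, so the optimal rule collapses to: award to $\arg\max_{j\in B} V_j^x(\alpha)$ iff $V^{\ast}\ge r^{\ast}(\alpha)$, where $r^{\ast}(\alpha)$ is the unique root of $\phi_\alpha(r^{\ast})=0$.

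Finally, this allocation is monotone in each $V_j$, so by Myerson's characterization it is implementable, concretely by a second price sealed bid auction with reserve $r^{\ast}(\alpha)$. Substituting the pointwise optimal rule back into the virtual surplus expression yields ${\rm Rev}^{\ast}(\alpha)=\mathbb{E}[\max\{\phi_\alpha(V^{\ast}),0\}]$, as claimed. The argument has no serious obstacle; the one subtlety to flag is that the whole statement is pointwise in $\alpha$, the proposition does not assert joint optimization over $\alpha$ and the auction, only that for each fixed delay the revenue-maximizing one-shot auction has the stated form, so I do not need to address how $r^{\ast}(\alpha)$ or ${\rm Rev}^{\ast}(\alpha)$ vary in $\alpha$.
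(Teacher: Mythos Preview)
Your proposal is correct and follows essentially the same route as the paper: both invoke Myerson's virtual-surplus identity at fixed $\alpha$, use regularity to reduce the pointwise-optimal allocation to ``highest bidder wins iff $V^\ast\ge r^\ast(\alpha)$'' with $\phi_\alpha(r^\ast)=0$, and then read off ${\rm Rev}^\ast(\alpha)=\mathbb{E}[\max\{\phi_\alpha(V^\ast),0\}]$. Your write-up is slightly more explicit about implementability via monotonicity and about the argument being pointwise in $\alpha$, but there is no substantive difference.
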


\noindent
If $V_j^x(\alpha) \sim \mathrm{Unif}[0,\bar v_\alpha]$, then $r^{\ast}(\alpha) = \bar v_\alpha/2$, and for a second–price auction with reserve,
\[
{\rm Rev}^{\ast}(\alpha)
  = \bar v_\alpha\!\left(\frac{m-1}{m+1} + \frac{2^{-m}}{m+1}\right).
\]

\begin{lemma}[Auction vs internalization cutoff]\label{lem:auction-cutoff}
Let $M(\tau)=\sup_{\alpha\ge0}U_{\rm mev}(\alpha,\tau)$.
The seller’s optimal action between \{\textsc{KEEP} (choose $\alpha\ge0$), \textsc{AUC}\} is
\[
  \arg\max\Bigl\{\,M(\tau_x),\ \ \sup_{\alpha\ge0} {\rm Rev}^*(\alpha)\,\Bigr\},
\]
where ${\rm Rev}^*(\alpha)$ is the Myerson optimal expected revenue for the $\alpha$‑parametrized value distribution. Immediate inclusion is $\alpha=0$ inside $M(\tau_x)$.
\end{lemma}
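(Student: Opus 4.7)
The plan is to reduce the statement to a direct comparison of the two exhaustive actions available to the seller (KEEP and AUC), show that each branch admits a well-defined optimal expected payoff equal to the corresponding expression inside the $\arg\max$, and conclude by taking the better of the two. Because the two actions are mutually exclusive and no other option is on the table (drop was already absorbed into the earlier censorship analysis, and holding $x$ indefinitely collapses into KEEP with $\alpha\to\infty$), the argument is essentially a value comparison.

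For the KEEP branch I would observe that the seller internalizes $x$ and chooses $\alpha \ge 0$ to maximize the per-transaction objective $U_{\rm mev}(\alpha,\tau_x)$. By the definition of $M$ introduced just before the lemma, $\sup_{\alpha\ge 0}U_{\rm mev}(\alpha,\tau_x) = M(\tau_x)$, and by Lemma \ref{lem:max} (unimodality, interior root when $\tau_x < Ak/\lambda$, boundary otherwise) this supremum is attained. Immediate inclusion is handled automatically as the boundary case $\alpha=0$ inside $M(\tau_x)$, using $U_{\rm mev}(0,\tau_x)=\tau_x$.

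For the AUC branch I would argue in two stages. For any fixed delay $\alpha$, bidders hold i.i.d.\ regular valuations $V_j^x(\alpha)\sim F_\alpha$, so by Proposition \ref{prop:myerson-iid} the revenue-maximizing one-shot auction with reserve $r^{\ast}(\alpha)$ delivers $\text{Rev}^*(\alpha)$, and no other mechanism can yield strictly more in expectation under the regularity assumption. Since the seller picks $\alpha$ as a commitment before running the auction (delaying the DD reveal for a window of length $\alpha$), the outer optimization is independent of within-auction bidder behavior, and the optimal AUC payoff is $\sup_{\alpha\ge 0}\text{Rev}^*(\alpha)$. Taking the larger of the two branch values yields the advertised $\arg\max$.

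The main obstacle is not algebraic but modeling: I need to justify that the seller genuinely has commitment power over $\alpha$ on the AUC branch, that the auction is resolved before $x$ leaks to third parties on DD, and that bidders' valuations $V_j^x(\alpha)=e^{-\lambda\alpha}\tau_x+\Delta_x(\alpha)$ already correctly internalize the hazard $\lambda$ during the chosen delay so that Proposition \ref{prop:myerson-iid} applies unmodified. Once this sequential timing and information structure is granted, the rest is a one-line comparison of two scalars and the lemma follows.
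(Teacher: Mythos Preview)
Your proposal is correct and matches the paper's own proof essentially line for line: compute the optimal KEEP payoff as $M(\tau_x)$ by definition, compute the optimal AUC payoff as $\sup_{\alpha\ge0}\mathrm{Rev}^*(\alpha)$ via Proposition~\ref{prop:myerson-iid}, then compare. Your extra remarks on attainment (via Lemma~\ref{lem:max}) and on the seller's commitment/timing are fine additions but go beyond what the paper supplies, which treats the lemma as a one-line value comparison.
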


\begin{example}[Uniform i.i.d., explicit revenue]
Let \(|B|=m\ge1\) and \(V_j^x(\alpha)\sim{\rm Unif}[0,\bar v_\alpha]\).
Then the virtual value is \(\phi_\alpha(v)=2v-\bar v_\alpha\) and the Myerson-optimal reserve is \(r^*(\alpha)=\bar v_\alpha/2\).
For a second-price auction with reserve \(r\), the expected revenue is
\[
  {\rm Rev}(r)
  \;=\;
  \bar v_\alpha\!\left(\frac{m-1}{m+1} + \left(\frac{r}{\bar v_\alpha}\right)^{\!m}
  - \frac{2m}{m+1}\left(\frac{r}{\bar v_\alpha}\right)^{\!m+1}\right).
\]
At the optimal \(r^*(\alpha)=\bar v_\alpha/2\),
\[
  {\rm Rev}^*(\alpha)
  \;=\;
  \bar v_\alpha\left(\frac{m-1}{m+1} + \frac{2^{-m}}{\,m+1}\right),
\]
which strictly exceeds the no-reserve revenue \(\bar v_\alpha\cdot\frac{m-1}{m+1}\).
\end{example}
In MCP, bidders' values typically differ via PoA rates $\mu_j$ and reaction payoffs $\Delta_{x,j}(\cdot)$, violating i.i.d.\ regularity.
Myerson still applies, but a simple mechanism is a posted reserve equal to the seller's keep value at zero delay, $r:=U_{\rm mev}(0,\tau_x)=\tau_x$. Sell access at price $r$ to the first taker (or run a second-price auction with reserve $r$). When encrypted payloads remove $\Delta_{x,j}$ and bidders have identical deterministic values $v(\alpha)=e^{-\lambda\alpha}\tau_x$, the optimal revenue equals $v(\alpha)$ via posted price. Hence $\alpha=0$ maximizes revenue and a posted reserve $r=\tau_x$ is revenue equivalent to auctioning.
\subsection{Timing games}
Consider two users $a,b$ who compete for an ordering-dependent surplus around the same on-chain object (e.g.\ an auction or AMM arb). Let $W>0$ and $w\ge0$ denote the first  and second mover surpluses. Each user chooses a send time within tick $t$, after random latencies their transactions appear on DD and if a PoA is obtained before the tick boundary are jointly ordered by the protocol's deterministic inter-block rule. 

Let $\pi_{u\to v}$ be the conditional probability that $u$ executes before $v$ in the tick. If $b$ conditions on observing $a$'s DD publish before sending (the ``wait-then-snipe'' strategy), let $\rho_b\in[0,1]$ be the probability that $b$ still makes it into tick $t$ after observing, and let $\pi^{\rm snipe}_{b\rightarrow a}$ be the conditional probability $b$ executes before $a$ under that strategy (typically $\pi^{\rm snipe}_{b\rightarrow a}>\pi_{b\rightarrow a}$ if $b$ is the faster publisher).

\begin{proposition}[Best response to a reveal]\label{prop:timing-br}
If $a$ transmits early in tick $t$ so that $b$ can observe $a$ on DA, then $b$'s best response is to wait and snipe iff
\[
  \rho_b\,\pi^{\rm snipe}_{b\rightarrow a}\,W
  \;+\;
  \rho_b\,(1-\pi^{\rm snipe}_{b\rightarrow a})\,w
  \;\ge\;
  \pi_{b\rightarrow a}\,W
  \;+\;
  (1-\pi_{b\rightarrow a})\,w,
\]
with strict inequality for a strict preference \cite{budish2015fba}.
\end{proposition}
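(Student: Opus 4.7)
The statement is essentially a comparison-of-expected-utilities claim, so the plan is to write down $b$'s expected payoff under each candidate strategy, subtract, and read off the cutoff. The key economic inputs are already fixed by the setup: the prize $W$ for executing first, the consolation $w$ for executing second, and the ordering probabilities $\pi_{b\to a}$, $\pi^{\rm snipe}_{b\to a}$, together with the tick-inclusion probability $\rho_b$ conditional on waiting.

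First I would define the expected payoff under the \emph{send-immediately} benchmark. Under this strategy, $b$'s transaction is assumed to reach tick $t$ with probability one (no wait risk), and conditional on landing in $t$ the deterministic inter-block rule executes $b$ before $a$ with probability $\pi_{b\to a}$. Hence
\[
\mathbb{E}[U_b^{\rm send}] \;=\; \pi_{b\to a}\,W + (1-\pi_{b\to a})\,w,
\]
which is exactly the right-hand side of the claimed inequality.

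Next I would compute the \emph{wait-and-snipe} payoff. Here $b$ defers transmission until observing $a$'s DA publish; this introduces a tick-inclusion risk, so $b$'s transaction lands in $t$ only with probability $\rho_b$, and lands in a later tick (producing zero surplus on the shared object) with probability $1-\rho_b$. Conditional on landing in $t$, the ordering probability is $\pi^{\rm snipe}_{b\to a}$, giving surplus $W$ in that event and $w$ otherwise. Summing the three mutually exclusive outcomes yields
\[
\mathbb{E}[U_b^{\rm snipe}] \;=\; \rho_b\,\pi^{\rm snipe}_{b\to a}\,W + \rho_b\,(1-\pi^{\rm snipe}_{b\to a})\,w + (1-\rho_b)\cdot 0,
\]
which matches the left-hand side. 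The proposition then follows immediately by defining the best response as $\arg\max$ over the two pure strategies and writing the preference in terms of the difference $\mathbb{E}[U_b^{\rm snipe}]-\mathbb{E}[U_b^{\rm send}]$, with the strict version of the inequality giving a strict preference.

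The main conceptual point, rather than a technical obstacle, is justifying the specification that a missed tick contributes zero surplus (not $w$): this reflects the convention that $W$ and $w$ are defined as the first- and second-mover payoffs \emph{within the same tick}, since the ordering-dependent surplus (e.g.\ the AMM arb or auction outcome) is resolved at execution time and a late arrival either faces a changed state or competes with different transactions. I would state this modeling choice explicitly before the computation so that the asymmetric appearance of $\rho_b$ in the two expressions is unambiguous; once that is fixed, both sides of the inequality are mechanical, and the citation to Budish \emph{et al.} supports the discrete-time timing-game interpretation.
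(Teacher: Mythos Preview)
Your proof is correct and follows essentially the same approach as the paper: compute $b$'s expected payoff under each of the two pure actions (send now vs.\ wait-and-snipe), invoke the impatience assumption to assign zero payoff when $b$ misses the tick, and compare. The paper's argument is identical in structure and content, so there is nothing to add.
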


\begin{lemma}[Deadline effect]\label{lem:deadline}
Suppose both users are impatient (zero value if not included in tick $t$) and $b$ has a strict same-tick ordering advantage $\pi^{\rm snipe}_{b\rightarrow a}>\tfrac12\ge\pi_{b\rightarrow a}$. Then there exists a last admissible send time $\bar s\in(0,1)$ (normalized tick window) at which $b$ is indifferent between ``send now'' and ``wait''; for any earlier time $s<\bar s$ at which $a$ can be observed before $\bar s$, $b$ strictly prefers to wait. Hence when both players are fast enough to make $\bar s$, the unique subgame perfect equilibrium has $a$ transmit at $\bar s$ and $b$ wait to snipe at $\bar s^+$, creating a within-tick deadline race.
\end{lemma}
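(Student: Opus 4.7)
The plan is to reduce existence of $\bar s$ to a one-dimensional intermediate-value argument and then close the equilibrium claim by backward induction on a two-stage commitment game. First I would introduce a reduced-form latency primitive $\rho_b:[0,1]\to[0,1]$ giving the probability that $b$'s snipe still makes tick $t$ when $a$ is observed at normalized time $s$, together with an analogous $\rho_a$ for $a$'s own deadline. The only assumptions required are that each is continuous and strictly decreasing with value $1$ at $0^+$ and $0$ at $1^-$; both are direct consequences of PoA latency and the tick deadline in the Section~3 model.

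Next I would form the gap
\[
\Delta(s)\;:=\;\rho_b(s)\bigl(\pi^{\rm snipe}_{b\to a}W+(1-\pi^{\rm snipe}_{b\to a})w\bigr)\;-\;\bigl(\pi_{b\to a}W+(1-\pi_{b\to a})w\bigr),
\]
which is the difference between $b$'s wait-and-snipe payoff conditional on observing $a$ at $s$ and the ``send now'' payoff that appears in Proposition~\ref{prop:timing-br}. The hypothesis $\pi^{\rm snipe}_{b\to a}>\tfrac12\ge\pi_{b\to a}$ together with the first-mover premium $W>w\ge 0$ gives $\Delta(0^+)=(\pi^{\rm snipe}_{b\to a}-\pi_{b\to a})(W-w)>0$, while $\rho_b(1^-)=0$ gives $\Delta(1^-)<0$. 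Continuity and strict monotonicity of $\rho_b$ transfer to $\Delta$, so the intermediate value theorem delivers a unique $\bar s\in(0,1)$ with $\Delta(\bar s)=0$ and $\Delta(s)>0$ on $[0,\bar s)$, which is exactly the strict wait preference.

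For the SPE I would pass to a sequential game in which $a$ commits to $s_a\in[0,1]$ and $b$ plays Proposition~\ref{prop:timing-br}'s best response at $s=s_a$, with the tie-break ``snipe when indifferent'' at $s=\bar s$. I would split $a$'s choice into two regions. On $s_a\in[0,\bar s]$, $b$ snipes and the ``both fast enough'' hypothesis makes $\rho_a\equiv 1$, so $a$'s payoff is the constant snipe value $(1-\pi^{\rm snipe}_{b\to a})W+\pi^{\rm snipe}_{b\to a}w$, attained throughout the interval and in particular at $\bar s$. On $s_a\in(\bar s,1]$, $b$'s best off-path play is to have sent at $\bar s$, so $a$ arrives strictly later than $b$ and lands in the second-mover slot with probability close to one, giving payoff near $w$, which is strictly smaller than the snipe value by $(1-\pi^{\rm snipe}_{b\to a})(W-w)>0$. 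A standard ``race to the deadline'' refinement resolves indifference across $[0,\bar s]$ toward the latest payoff-equivalent send time, picking out $s_a=\bar s$ and delivering the SPE with $b$ sniping at $\bar s^+$.

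The main obstacle is the SPE argument, specifically isolating $s_a=\bar s$ as the unique maximizer rather than any other point in $[0,\bar s]$, and ensuring that $b$'s off-path behavior on $(\bar s,1]$ is pinned down by Proposition~\ref{prop:timing-br} applied symmetrically at $b$'s own decision boundary. The first issue demands an explicit tie-breaking or leakage convention (e.g.\ an arbitrarily small cost for early reveal to third-party observers) to break the $\rho_a\equiv 1$ plateau; the second requires checking that $b$'s best reply to the absence of an $a$-reveal by time $\bar s$ is indeed to send at $\bar s$, which follows from the same $\Delta(\bar s)=0$ computation transposed to $b$'s perspective. Once both conventions are stated, the equilibrium is pinned down uniquely.
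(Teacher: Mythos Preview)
Your existence argument for $\bar s$ is essentially identical to the paper's: the function you call $\Delta(s)$ is exactly the paper's $h(s)$, and the intermediate-value step, the sign computations at the endpoints, and the use of $W>w$ all match.

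For the SPE step the paper gives only a one-line appeal to backward induction, whereas you attempt a fuller two-region analysis; that extra care is welcome, but your computation of $a$'s payoff on $[0,\bar s]$ is wrong. You assert that because $\rho_a\equiv 1$ the payoff is the constant $(1-\pi^{\rm snipe}_{b\to a})W+\pi^{\rm snipe}_{b\to a}\,w$. But when $b$ waits to snipe after observing $a$ at $s_a$, $b$ only makes the tick with probability $\rho_b(s_a)<1$; on the complementary event $a$ is alone and earns $W$. Hence $a$'s expected payoff at $s_a\in[0,\bar s]$ is
\[
\rho_b(s_a)\bigl[(1-\pi^{\rm snipe}_{b\to a})W+\pi^{\rm snipe}_{b\to a}\,w\bigr]
+(1-\rho_b(s_a))\,W
\;=\;W-\rho_b(s_a)\,\pi^{\rm snipe}_{b\to a}(W-w),
\]
which is \emph{strictly increasing} in $s_a$ because $\rho_b$ is strictly decreasing and $W>w$. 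This monotonicity is precisely what pushes $a$ to the latest admissible time $\bar s$ and is the content of the ``deadline'' claim. The indifference plateau you flagged as ``the main obstacle'' is an artefact of omitting $\rho_b(s_a)$ from $a$'s payoff; once corrected, no auxiliary tie-breaking convention or leakage cost is needed to isolate $s_a=\bar s$.
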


\begin{corollary}[Protocol-level mitigation]
If the canonical inter-block ordering is tip-priority with dependency constraints, raising $\tau_a$ increases $a$'s effective $\pi_{a\rightarrow b}$ and shrinks the waiting region in Lemma~\ref{lem:deadline}. In the limit where $\pi^{\rm snipe}_{b\rightarrow a}$ cannot exceed $\pi_{b\rightarrow a}$ (e.g.\ encrypted payloads before execution), waiting ceases to be optimal for $b$.
\end{corollary}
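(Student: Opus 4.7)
The plan is to read off both claims directly from Proposition~\ref{prop:timing-br} by tracking how its ordering probabilities depend on $\tau_a$. First I would formalize the ordering rule: under tip-priority with dependency constraints, whenever $a$ and $b$ are independent in the dependency DAG, the canonical within-tick rule places the higher-tip transaction first. Hence, conditional on both transactions making tick $t$, $\pi_{a\to b}$ is weakly (and generically strictly) increasing in $\tau_a$, so both $\pi_{b\to a}(\tau_a)$ and $\pi^{\rm snipe}_{b\to a}(\tau_a)$ are weakly decreasing in $\tau_a$.

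For the first claim, I would substitute into the best-response inequality and examine the wait-minus-send payoff
\[
  \Phi(\tau_a,s)\;=\;\rho_b(s)\bigl(\pi^{\rm snipe}_{b\to a}(\tau_a)(W-w)+w\bigr)\;-\;\bigl(\pi_{b\to a}(\tau_a)(W-w)+w\bigr),
\]
where the send time $s$ enters only through the probability $\rho_b(s)$ that $b$ still catches the tick after waiting on $a$'s reveal. The waiting region in Lemma~\ref{lem:deadline} is $\{s:\Phi(\tau_a,s)>0\}$. I would then argue that under tip-priority, raising $\tau_a$ erodes the ordering edge that waiting buys $b$ at least as much as it erodes $b$'s unconditional ordering chance, so $\Phi(\cdot,s)$ is nonincreasing in $\tau_a$ for every admissible $s$, giving set inclusion of the waiting regions and pushing the indifference time $\bar s$ later in the tick.

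For the limit claim, once the strategy space collapses to $\pi^{\rm snipe}_{b\to a}\le\pi_{b\to a}$ (e.g.\ because encrypted payloads strip the reveal-based edge of waiting) I would plug into the Proposition~\ref{prop:timing-br} inequality. Using $\rho_b\le 1$ and $W\ge w$,
\[
  \rho_b\bigl(\pi^{\rm snipe}_{b\to a}W+(1-\pi^{\rm snipe}_{b\to a})w\bigr)\;\le\;\pi^{\rm snipe}_{b\to a}W+(1-\pi^{\rm snipe}_{b\to a})w\;\le\;\pi_{b\to a}W+(1-\pi_{b\to a})w,
\]
so waiting is weakly dominated by sending immediately, and strictly so whenever $\rho_b<1$ or either inequality is strict.

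The hard part will be the comparative-statics step: justifying rigorously that $\pi^{\rm snipe}_{b\to a}$ decreases in $\tau_a$ at least as fast as $\pi_{b\to a}$ does. Intuitively, after observing $a$ on DA the only edge left to $b$ is priority-based, which is precisely the channel being attacked by raising $\tau_a$, but making this precise requires joint modeling of the latent publication-time distribution and the tip-priority rule. I would impose the mild regularity assumption that $a$ and $b$ are not forced into a fixed order by the dependency DAG (so tips genuinely control the within-tick placement), and establish monotonicity by a coupling of the randomization over arrival and PoA times across the two values of $\tau_a$.
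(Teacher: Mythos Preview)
The paper states this corollary without proof; it is offered as a qualitative consequence of Proposition~\ref{prop:timing-br}, Lemma~\ref{lem:deadline}, and the tip-monotonicity of the ordering rule (later formalized as Theorem~\ref{thm:pdm}). Your argument for the limit claim is correct and is exactly the comparison the paper has in mind: once $\pi^{\rm snipe}_{b\to a}\le\pi_{b\to a}$, the chain $\rho_b\bigl(\pi^{\rm snipe}_{b\to a}(W-w)+w\bigr)\le \pi^{\rm snipe}_{b\to a}(W-w)+w\le \pi_{b\to a}(W-w)+w$ (using $\rho_b\le1$ and $W\ge w$) shows waiting is weakly dominated.

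For the first claim, however, your monotonicity step has a gap beyond the one you already flag. You propose to deduce that $\Phi(\cdot,s)$ is nonincreasing in $\tau_a$ from the hypothesis that $\pi^{\rm snipe}_{b\to a}$ falls in $\tau_a$ at least as fast as $\pi_{b\to a}$. But
\[
\partial_{\tau_a}\Phi=(W-w)\Bigl[\rho_b(s)\,\partial_{\tau_a}\pi^{\rm snipe}_{b\to a}-\partial_{\tau_a}\pi_{b\to a}\Bigr],
\]
and $|\partial_{\tau_a}\pi^{\rm snipe}_{b\to a}|\ge|\partial_{\tau_a}\pi_{b\to a}|$ does \emph{not} force this bracket to be nonpositive once the first (negative) term is damped by $\rho_b(s)<1$: take $\rho_b=\tfrac12$, $\partial_{\tau_a}\pi^{\rm snipe}_{b\to a}=-0.2$, $\partial_{\tau_a}\pi_{b\to a}=-0.15$, and the bracket is $+0.05$. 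The condition you actually need is $\rho_b(s)\,|\partial_{\tau_a}\pi^{\rm snipe}_{b\to a}|\ge|\partial_{\tau_a}\pi_{b\to a}|$, which is $s$-dependent and strictly stronger than what your coupling would deliver. A route closer to the paper's informal reading avoids this entirely: under tip-priority with $a,b$ independent in the DAG, the within-tick order conditional on both making the tick depends only on tips and fixed tie-breakers, so $\pi^{\rm snipe}_{b\to a}=\pi_{b\to a}$ and you are already in the limit regime; the waiting region then collapses by your second argument, and raising $\tau_a$ simply pushes both probabilities (and hence $h(0)$ in Lemma~\ref{lem:deadline}) toward zero.
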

\subsection{Inter-block MEV}
Inter-block MEV arises when a proposer can craft transactions that condition on, or react to, transactions revealed on DD in other proposers' blocks within the same tick. Let $\mathcal R$ denote a finite family of such reaction strategies (e.g.\ backruns, cross-DEX arbs, liquidations). For $r\in\mathcal R$ and a realized set of counterpart transactions $Y$ revealed on DD before the tick boundary, let the incremental value of executing $r$ at a given relative position be $\Delta_r(Y,\text{pos})$.

Let $\pi^{\rm pre}_r(Y)$ (resp.\ $\pi^{\rm post}_r(Y)$) denote the probability that $r$ is executed before its triggering leg(s) under the canonical ordering, conditional on $r$ and the triggers all making tick $t$. Write $\Pi_r(Y):=\pi^{\rm pre}_r(Y)\Delta_r(Y,\text{pre})+\pi^{\rm post}_r(Y)\Delta_r(Y,\text{post})$ for the expected payoff of $r$ given $Y$.

\begin{lemma}[Additive decomposition]\label{lem:inter-add}
If the feasible reactions $\mathcal R$ do not share state conflicts (i.e.\ their execution sets are disjoint or commute under the block's state transition), the total inter-block MEV for a proposer is
\[
  \mathbb E\!\left[\;\sum_{r\in\mathcal R}\mathbf 1\{r\text{ included}\}\cdot \Pi_r(Y)\;\right]
  \;=\;
  \sum_{r\in\mathcal R}\Pr[r\text{ included}]\,\mathbb E\!\left[\Pi_r(Y)\right].
\]
\end{lemma}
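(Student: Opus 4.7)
The plan is to prove this in two stages: first argue that the no-conflict hypothesis lets us write the total inter-block MEV as the linear sum inside the expectation, then apply linearity of expectation together with an independence (conditional factorization) observation to obtain the product form.

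For the first stage, I would note that when the reactions in $\mathcal R$ have disjoint state footprints or pairwise commute under the block's state transition, the realized profit from executing any subset $S\subseteq\mathcal R$ is path-independent: for each $r\in S$, the state seen by $r$ at its execution position agrees with what it would see in the singleton environment $\{r\}$ relative to $Y$. In particular, the incremental value $\Pi_r(Y)$ is unaffected by which other $r'\in S$ are also included. Consequently, the realized inter-block MEV equals the pointwise sum
\[
  \sum_{r\in\mathcal R}\mathbf 1\{r\text{ included}\}\,\Pi_r(Y),
\]
so taking expectations and applying linearity gives
\[
  \mathbb E\!\left[\sum_{r\in\mathcal R}\mathbf 1\{r\text{ included}\}\,\Pi_r(Y)\right]
  = \sum_{r\in\mathcal R}\mathbb E\!\left[\mathbf 1\{r\text{ included}\}\,\Pi_r(Y)\right].
\]

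For the second stage I would invoke the model's standing assumption that the randomness controlling whether $r$ obtains a PoA and lands in tick $t$ (latency, gossip, certification success) is statistically independent of the market randomness driving the payoff $\Pi_r(Y)$. Under that separation, $\mathbf 1\{r\text{ included}\}$ and $\Pi_r(Y)$ factor, giving
\[
  \mathbb E\!\left[\mathbf 1\{r\text{ included}\}\,\Pi_r(Y)\right]
  = \Pr[r\text{ included}]\cdot \mathbb E[\Pi_r(Y)],
\]
and summing over $r$ yields the claim. If inclusion decisions are instead $Y$-measurable (the proposer inspects $Y$ before deciding), the same factorization still holds term-by-term after conditioning on $Y$, since $\mathbf 1\{r\text{ included}\}$ becomes deterministic given $Y$ and one takes an outer expectation; one then interprets $\Pr[r\text{ included}]$ as $\mathbb E[\mathbf 1\{r\text{ included}\}]$.

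The main obstacle is not the algebra but the commutativity step: in realistic DeFi state (AMMs with shared pools, shared oracle writes, liquidation queues consuming the same collateral), reactions will routinely share state, so the hypothesis is restrictive. I would be explicit that the lemma isolates a tractable regime and that in the general case the right-hand side becomes an upper or lower bound depending on whether the conflicts are substitutable or complementary; a remark pointing to this would make the scope of the additive decomposition honest without lengthening the proof itself.
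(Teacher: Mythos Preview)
Your main argument is correct and matches the paper's proof: apply linearity of expectation to pull the sum outside, then factor each term $\mathbb E[\mathbf 1\{r\text{ included}\}\,\Pi_r(Y)]$ via independence of the inclusion indicator and the payoff. The only cosmetic difference is that the paper sources that independence directly from the no-conflict hypothesis (inclusion can be decided without reference to $Y$), whereas you attribute it to a separate network-versus-market randomness assumption; the mechanics are identical, and your extra first stage (justifying why the realized MEV is additive at all) is a useful clarification the paper leaves implicit.

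One genuine slip, though: your aside about the $Y$-measurable case is wrong. If the proposer inspects $Y$ and includes $r$ precisely when $\Pi_r(Y)$ is large, then $\mathbf 1\{r\text{ included}\}$ and $\Pi_r(Y)$ are both functions of $Y$ and will typically be positively correlated, so $\mathbb E[\mathbf 1\{r\text{ included}\}\,\Pi_r(Y)]\neq \Pr[r\text{ included}]\,\mathbb E[\Pi_r(Y)]$. Conditioning on $Y$ makes each factor deterministic, but taking the outer expectation of a product of $Y$-measurable quantities does not yield the product of their expectations. The factorization genuinely requires independence (or at least uncorrelatedness) of inclusion and payoff, which is exactly what the exogenous-inclusion reading of the hypothesis provides; drop the $Y$-measurable alternative or rephrase it as a case where the identity fails.
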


\begin{proposition}[Comparative statics in concurrency]\label{prop:inter-cs}
With $\alpha^*(\tau;n)=\frac{1}{k(n)}\!\left[\ln A+\ln k(n)-\ln\lambda(n)-\ln\tau\right]$ for the interior region $Ak(n)>\lambda(n)\tau$,
\[
  \frac{d\alpha^*}{dn}
  \;=\;
  -\frac{k'}{k^2}\ln\!\frac{Ak}{\lambda\tau}
  \;+\;
  \frac{1}{k}\!\left(\frac{k'}{k}-\frac{\lambda'}{\lambda}\right).
\]
In particular, at the boundary $Ak=\lambda\tau$ (where $\alpha^*=0$), 
\[
  \mathrm{sign}\!\left(\frac{d\alpha^*}{dn}\right) \;=\; \mathrm{sign}\!\left(\frac{k'}{k}-\frac{\lambda'}{\lambda}\right).
\]
More generally, $\alpha^*$ increases in $n$ whenever
\[
  \frac{k'}{k}-\frac{\lambda'}{\lambda} \;\ge\; \frac{k'}{k}\,\ln\!\frac{Ak}{\lambda\tau}.
\]
\end{proposition}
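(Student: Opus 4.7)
The statement is essentially a careful differentiation of the closed form $\alpha^\star(\tau;n)=L(n)/k(n)$ where $L(n):=\ln(Ak(n)/(\lambda(n)\tau))$, so my plan is to treat it as a routine calculus exercise and then read off the two asserted specializations. The first step is to rewrite the maximizer, valid in the interior region $Ak(n)>\lambda(n)\tau$ given by Lemma~\ref{lem:max}, as
\[
\alpha^\star(\tau;n)=\frac{L(n)}{k(n)},\qquad L(n)=\ln A+\ln k(n)-\ln\lambda(n)-\ln\tau,
\]
so that the derivatives of $k$ and $\lambda$ enter only through $L'(n)=k'(n)/k(n)-\lambda'(n)/\lambda(n)$, suppressing the argument $n$ henceforth.

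The second step is to apply the quotient rule:
\[
\frac{d\alpha^\star}{dn}=\frac{L'\,k-L\,k'}{k^{2}}=\frac{1}{k}\!\left(\frac{k'}{k}-\frac{\lambda'}{\lambda}\right)-\frac{k'}{k^{2}}\,L,
\]
and then substitute $L=\ln(Ak/(\lambda\tau))$ to recover the displayed expression for $d\alpha^\star/dn$ in the proposition. This is the only computation of substance.

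The third step handles the boundary statement. At $Ak=\lambda\tau$ we have $L=0$, so the second term vanishes identically and the first is left unchanged; since $1/k>0$, the sign of $d\alpha^\star/dn$ equals $\mathrm{sign}(k'/k-\lambda'/\lambda)$, as claimed. Note that this is a one-sided statement at the corner of the interior region: we differentiate $\alpha^\star$ along the interior branch and take the limit, which is legitimate because $\alpha^\star$ is $C^{1}$ in $n$ wherever $k,\lambda$ are differentiable and $Ak>\lambda\tau$.

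The fourth and final step is the comparative-static inequality. Multiplying $d\alpha^\star/dn\ge 0$ by $k>0$ and moving the $\ln$ term to the right gives exactly
\[
\frac{k'}{k}-\frac{\lambda'}{\lambda}\;\ge\;\frac{k'}{k}\,\ln\!\frac{Ak}{\lambda\tau},
\]
which is the necessary and sufficient condition stated. I do not anticipate a real obstacle; the only subtlety worth flagging is that the formula is only valid on the interior branch, since on the boundary branch $\alpha^\star\equiv 0$ and all derivatives vanish trivially, and so the monotonicity claim should be understood as holding along the smooth interior piece where the envelope-theorem expression for $\alpha^\star$ applies.
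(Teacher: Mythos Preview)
Your proof is correct and follows essentially the same approach as the paper: writing $\alpha^\star=L/k$ with $L=\ln(Ak/(\lambda\tau))$, applying the quotient rule together with $L'=k'/k-\lambda'/\lambda$, and then reading off the boundary and sign conditions. Your added remark about the interior branch is a helpful clarification but does not depart from the paper's argument.
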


\begin{corollary}[Dominance region]
There exists an $n^\dagger$ such that for $n<n^\dagger$ the censor/delay cutoffs from Lemma~\ref{lem:drop-cutoff} are essentially sequential-like, while for $n>n^\dagger$ the incremental inter-block MEV raises $M(\tau)$ enough to expand the region where \textsc{DELAY} dominates \textsc{INC}.
\end{corollary}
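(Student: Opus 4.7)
The plan is to interpret the MEV accrual coefficient as concurrency-dependent, $A = A(n)$, and then invoke monotonicity of the delay envelope $M(\tau)$ and of the immediate-inclusion threshold $\tau^\dagger = Ak/\lambda$ in $A$. Concretely, I would decompose $A(n) = A_0 + \bar\Pi(n)$, with $A_0$ the intra-block (sequential) baseline and $\bar\Pi(n) := \sum_{r\in\mathcal R_n}\Pr[r\text{ included}]\,\mathbb{E}[\Pi_r(Y)]$ the expected inter-block reactive value from the additive decomposition in Lemma~\ref{lem:inter-add}. Larger $n$ weakly enlarges the counterpart set $Y$ revealed on DD during a tick, so $\bar\Pi(n)$ is non-decreasing, and generically strictly increasing, in $n$.

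Next I would observe that the closed form of $M(\tau)$ is non-decreasing in $A$ for every fixed $(\tau,k,\lambda)$. The quickest route is the envelope theorem applied along $\alpha^\star(\tau)$ from Lemma~\ref{lem:max}: $\partial M/\partial A = \partial U_{\rm mev}/\partial A\big|_{\alpha^\star} \ge 0$, since $U_{\rm mev}$ depends on $A$ only through the accrual term $\frac{Ak}{k+\lambda}\bigl(1-e^{-(k+\lambda)\alpha}\bigr)$, which is non-decreasing in $A$. Hence $M(\tau;n)$ is non-decreasing in $n$, and the threshold $\tau^\dagger(n) = A(n)k(n)/\lambda(n)$ weakly grows whenever the scaling of $k(n)/\lambda(n)$ does not cancel that of $A(n)/A_0$, an assumption I would state explicitly as part of the formalization and which is consistent with the boundary behavior in Proposition~\ref{prop:inter-cs}.

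The existence of $n^\dagger$ then follows from a standard threshold argument: fix a tolerance $\eta > 0$ and define $n^\dagger := \min\{n : \bar\Pi(n) \ge \eta A_0\}$. For $n < n^\dagger$ one has $A(n) \le (1+\eta)A_0$, so both $M(\tau;n)$ and the drop/keep cutoff $\tau_d$ from Lemma~\ref{lem:drop-cutoff} lie within $O(\eta)$ of their sequential analogues, which is the precise sense of \emph{essentially sequential-like}. For $n > n^\dagger$, continuity of $M$ in $A$ together with strict monotonicity at interior points ($\alpha^\star(\tau) > 0$, i.e.\ $\tau < \tau^\dagger$) yields a strict increase of $M(\tau;n)$ on a non-trivial interval, so the \textsc{DELAY} region $\{\tau : M(\tau;n) > \tau\}$, equivalently $\{\tau : \tau < \tau^\dagger(n)\}$, strictly expands beyond its sequential counterpart.

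The principal obstacle is not analytic but definitional: the corollary is intentionally stated in soft language (\emph{essentially sequential-like}, \emph{expand the region}), so the cleanest proof requires committing to a tolerance $\eta$ (or, equivalently, an $\varepsilon$ in envelope distance) and to mild monotonicity assumptions on $(A(n),k(n),\lambda(n))$ that rule out pathological regimes in which $\lambda(n)$ outruns $A(n)k(n)$. Once these are fixed, the remainder uses only Lemma~\ref{lem:inter-add}, the closed-form expression for $M(\tau)$, and the cutoff $\tau^\dagger = Ak/\lambda$; no new optimization is required.
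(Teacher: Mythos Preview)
The paper does not actually supply a proof for this corollary; it is stated immediately after Proposition~\ref{prop:inter-cs} and left as a qualitative consequence of that comparative-statics result. So there is no paper proof to compare line by line.

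That said, your formalization differs from the paper's intended route. Proposition~\ref{prop:inter-cs} makes the concurrency parameter $n$ act through the rates $(k(n),\lambda(n))$, with $A$ held fixed, and the corollary is meant to be read as: once the sign condition $\tfrac{k'}{k}-\tfrac{\lambda'}{\lambda}\ge\tfrac{k'}{k}\ln\tfrac{Ak}{\lambda\tau}$ holds, $\alpha^\star$ (and hence the \textsc{DELAY} region via $\tau^\dagger=Ak/\lambda$) grows with $n$. You instead push concurrency into the amplitude, writing $A(n)=A_0+\bar\Pi(n)$ via Lemma~\ref{lem:inter-add}, and then use monotonicity of $M(\tau)$ in $A$. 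Both routes are valid under their respective monotonicity assumptions and yield the same qualitative threshold statement; yours has the virtue of tying the ``incremental inter-block MEV'' language directly to the additive decomposition, while the paper's framing is closer to a pure rate effect. Your tolerance-$\eta$ definition of $n^\dagger$ and the $O(\eta)$ reading of ``essentially sequential-like'' are more precise than anything the paper commits to, which is appropriate given how softly the corollary is worded.

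The only genuine caution is your parenthetical assumption that ``the scaling of $k(n)/\lambda(n)$ does not cancel that of $A(n)/A_0$''; since the paper's own comparative statics live entirely in $(k(n),\lambda(n))$, you should either align with that channel or state your amplitude assumption as a standalone hypothesis rather than as consistency with Proposition~\ref{prop:inter-cs}.
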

\subsection{Spam}
We consider two spam channels: (i) DA/PoA spam to crowd the set of blocks that make a tick and (ii) ordering spam to jump priority when the canonical execution order is tip-priority with dependencies.

Each tick admits at most $L$ PoA-certified blocks into consensus. Let an attacker submit $s$ empty (or low-value) blocks at per-block cost $c_{\rm DA}>0$, raising the chance that a victim block $b$ with a valuable transaction $x$ misses the tick. Let $\theta_b(s)$ be the probability that $b$ is excluded from the top-$L$ set due to congestion from spam $s$, with $\theta_b'(s)>0$.

If the attacker stands to gain prize $R_x$ (e.g.\ by stealing $x$ per Proposition~\ref{prop:steal-threshold} when $b$ misses), the attacker's expected profit is
\[
  \Pi_{\rm spam}(s)
  \;=\;
  \theta_b(s)\,R_x \;-\; s\,c_{\rm DA}.
\]

\begin{lemma}[Minimal profitable spam]\label{lem:min-spam}
If $\theta_b$ is differentiable and strictly concave, the unique interior optimum satisfies $\theta_b'(s^*)\,R_x=c_{\rm DA}$. In particular, a profitable attack exists iff $\max_{s\ge0}\theta_b'(s)\,R_x>c_{\rm DA}$. If $\theta_b$ is piecewise linear (capacity cliff), then the minimal profitable spam is the smallest $s$ that pushes $b$ below rank $L$.
\end{lemma}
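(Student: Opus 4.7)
The plan is to treat $\Pi_{\rm spam}(s)=\theta_b(s)R_x - s\,c_{\rm DA}$ as a one-dimensional maximization on $s\ge 0$ and exploit the structure of $\theta_b$ separately in each case. For the strictly concave branch I would differentiate once and apply standard concave-program arguments; for the capacity-cliff branch the problem becomes essentially combinatorial, since $\theta_b$ fails to be differentiable at the rank-$L$ boundary.

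For the smooth case I would first note that strict concavity of $\theta_b$ together with linearity of $s\,c_{\rm DA}$ makes $\Pi_{\rm spam}$ strictly concave on $[0,\infty)$, so any critical point is the unique global maximizer. Differentiating yields $\Pi_{\rm spam}'(s)=\theta_b'(s)R_x - c_{\rm DA}$, and setting this to zero gives the stated first-order condition $\theta_b'(s^*)R_x=c_{\rm DA}$. Because $\theta_b'$ is strictly decreasing under strict concavity, $\max_{s\ge 0}\theta_b'(s)=\theta_b'(0)$, so the profitability condition $\max_{s\ge 0}\theta_b'(s)R_x>c_{\rm DA}$ is equivalent to $\Pi_{\rm spam}'(0)>0$; when this holds, integrating the derivative from $0$ to $s^*$ gives strictly positive profit, and when it fails $\Pi_{\rm spam}'\le 0$ everywhere so $\Pi_{\rm spam}$ is nonincreasing on $[0,\infty)$ with $\Pi_{\rm spam}(0)=0$, making $s=0$ optimal.

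For the piecewise-linear capacity-cliff case I would partition $[0,\infty)$ at the threshold $s^\sharp$ where adding spam first evicts $b$ from the top-$L$ cut. Below $s^\sharp$ the incremental exclusion probability is essentially zero while each added block still costs $c_{\rm DA}$, so $\Pi_{\rm spam}$ strictly decreases on $[0,s^\sharp)$; at $s^\sharp$ the function jumps up by $\Delta\theta_b\cdot R_x$ (the cliff height); past $s^\sharp$ the marginal benefit is again zero while cost keeps accruing. Hence the minimal profitable spam equals $s^\sharp$ whenever this jump exceeds $s^\sharp\,c_{\rm DA}$, matching the lemma's claim.

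The main obstacle is that $s^\sharp$ admits no closed-form expression: it depends on the realized distribution of competing blocks in the tick and on how the consensus rule breaks ties at rank $L$. Consequently $s^\sharp$ can only be defined implicitly as the smallest integer solving a combinatorial rank inequality, which is exactly why the lemma statement hedges by describing it in words. A clean formulation would need a mild monotonicity assumption on $\theta_b$ (so each additional spam block weakly lowers $b$'s rank) together with left-continuity at the cliff to rule out degenerate cases where the jump is achieved only in the limit.
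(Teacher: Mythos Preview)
Your proposal is correct and follows essentially the same approach as the paper: strict concavity of $\Pi_{\rm spam}$ yields the first-order condition and uniqueness, the profitability equivalence comes from $\Pi_{\rm spam}(0)=0$ together with the sign of $\Pi_{\rm spam}'(0)$, and the cliff case is handled by partitioning at the threshold where $b$ drops below rank $L$. Your closing remarks on monotonicity and left-continuity of $\theta_b$ are sensible caveats that go slightly beyond what the paper's own proof spells out, but the core argument is the same.
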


Suppose final execution within a tick uses tip-priority subject to dependencies. A user who wants to get ahead of a target transaction with tip $\tau^*$ can submit $K$ sacrificial transactions within the same tick with tips exceeding $\tau^*$ to pull their critical transaction forward by $K$ positions in the merged order (subject to dependency acyclicity). Let each sacrificial transaction cost base fee $f>0$ (burned) plus the posted tip paid to some proposer (not rebated to the spammer).

\begin{proposition}[Cost of tip-priority ordering spam]\label{prop:ordering-spam}
Let the required overbid above the current $k$-th order-statistic tip be $\Delta_k\ge0$ for $k=1,\dots,K$. The minimal total private cost to advance $K$ positions is
\[
  C_K
  \;=\;
  \sum_{k=1}^K (f + \tau^* + \Delta_k),
\]
while the private benefit is at most the incremental surplus gained from moving the target forward, denoted $B_K\le W$. Ordering spam is privately unprofitable whenever $C_K>B_K$; in particular, if $f$ is strictly positive, then beyond a finite $K$ the inequality holds regardless of $\Delta_k$.
\end{proposition}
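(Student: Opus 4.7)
The plan is to treat the attack as a sequential accounting problem: enumerate the minimal payment needed to advance the critical transaction past each blocking transaction, sum to obtain $C_K$, and then compare against an upper bound on the benefit. First I would derive the per-step cost. To displace the transaction currently occupying the $k$-th priority slot, the spammer must submit a sacrificial transaction whose tip strictly exceeds the $k$-th order-statistic tip; since the baseline tip at that slot is at least $\tau^*$ by the assumption that the target sits at or below position $k$, the minimal such tip is $\tau^*+\Delta_k$, where $\Delta_k\ge 0$ is exactly the overbid needed to break ties or dominate existing higher tips. Each sacrificial transaction additionally burns the base fee $f$, and because the tip accrues to whichever proposer includes the sacrificial transaction rather than being rebated to the spammer, it is a pure private cost. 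Summing $f+\tau^*+\Delta_k$ over $k=1,\dots,K$ yields the claimed $C_K$; minimality follows because any smaller tip at step $k$ fails to beat the $k$-th order-statistic and hence cannot produce that position's advancement.

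Next I would bound the benefit. By definition $W$ is the first-mover surplus in the surrounding timing game, so the incremental surplus gained from moving the target forward is capped by $W$, giving $B_K\le W$. Thus $C_K>B_K$ is immediately a sufficient condition for unprofitability, which establishes the second claim. For the asymptotic statement, I would use the crude bound $C_K\ge K f$, valid because every summand in $C_K$ is at least $f$ (as $\tau^*\ge 0$ and $\Delta_k\ge 0$). For any $K>W/f$ we then get $C_K\ge K f > W\ge B_K$, independent of the particular values of $\Delta_k$, so some finite $K$ always forces unprofitability as long as $f>0$.

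The main obstacle, such as it is, is justifying minimality of the per-step cost, since in principle the spammer might try to bundle several advances into one transaction, exploit dependency edges to piggyback on an already high-tipped tx, or bribe a proposer off-protocol. Under the maintained assumption that the canonical inter-block order is strict tip-priority subject to dependency acyclicity, each position advancement requires its own priority-dominating transaction with its own base fee, and off-protocol bribes lie outside the model. The accounting above is then tight, and the proposition follows.
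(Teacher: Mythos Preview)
Your proposal is correct and follows essentially the same approach as the paper: derive the per-step cost $f+\tau^*+\Delta_k$, sum to get $C_K$, bound $B_K\le W$, and use linear growth of $C_K$ in $K$ via $f>0$ to obtain the finite threshold. You are slightly more explicit than the paper in giving the crude bound $C_K\ge Kf$ and the concrete cutoff $K>W/f$, and in addressing why bundling or off-protocol bribes do not undercut minimality, but these are elaborations of the same argument rather than a different route.
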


\begin{corollary}
Raising $f$ (or charging for DD bytes that never execute \cite{eip4844}) increases $C_K$ and shrinks the profitable region for ordering spam without affecting honest inclusion incentives (since $f$ is burned).
\end{corollary}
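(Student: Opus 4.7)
The plan is to unpack the corollary into its three separate claims and handle each as a short consequence of the formula in Proposition~\ref{prop:ordering-spam} together with the burn property of $f$. I would first rewrite the cost as $C_K(f) = K(f+\tau^*) + \sum_{k=1}^{K}\Delta_k$ to isolate the dependence on $f$, then treat monotonicity, the shrinking profitable region, and honest-inclusion neutrality in that order.

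For monotonicity, I would simply differentiate (or take a finite difference) in $f$: for any fixed $K\ge 1$ and any fixed $\tau^*$ and $(\Delta_k)$, one has $\partial C_K/\partial f = K>0$, so $C_K$ is strictly increasing in $f$. The surcharge on unexecuted DD bytes enters the ledger in exactly the same additive way as $f$, so the same derivative argument applies to the EIP-4844-style charging version mentioned in the statement.

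For the profitable region, I would use Proposition~\ref{prop:ordering-spam}'s characterization: ordering spam is privately profitable only when $C_K\le B_K$, and $B_K\le W$ is a function of the ordering surplus alone, not of $f$. Since $C_K(f)$ is strictly increasing in $f$ while $B_K$ is independent of $f$, the set $\{K\in\mathbb{N} : C_K(f)\le B_K\}$ is monotonically non-increasing (in the set-inclusion sense) as $f$ grows, and in fact strictly shrinks once $K(f-f_0)$ exceeds the slack $B_K-C_K(f_0)$ for the marginal $K$. The last sentence of Proposition~\ref{prop:ordering-spam} already gives existence of a finite cutoff; the present claim is just that this cutoff is weakly decreasing in $f$.

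For the honest-inclusion part, I would observe that the proposer's baseline payoff from an honest transaction is $\tau_x$, as in $u_{\text{build}}^{t,i}(B_i^t)=\sum_{x\in B_i^t}\tau_x$, and $f$ appears nowhere in this expression because it is burned rather than paid to the proposer. Hence the proposer's inclusion decision, which depends on comparing $\tau_x$ against outside options such as $\beta$ or $M(\tau_x)$ via Lemma~\ref{lem:drop-cutoff}, is invariant under changes in $f$. The only real subtlety, and the single step I would flag as requiring care, is separating ``inclusion incentive'' on the proposer side (unaffected, since $f$ never enters proposer utility) from the user's submission cost (which does rise with $f$, but this is the intended deterrent against spam rather than a distortion of honest flow, because a single submission by an honest user only requires one copy of $f$, whereas spam requires $K$ copies). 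A brief remark to this effect closes the argument.
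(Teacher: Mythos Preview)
Your proposal is correct and matches the paper's intended argument. The paper does not actually supply a separate proof for this corollary; it is stated as an immediate consequence of the cost formula $C_K=\sum_{k=1}^K(f+\tau^*+\Delta_k)$ in Proposition~\ref{prop:ordering-spam} together with the fact that $f$ is burned, and your write-up simply makes explicit the monotonicity, profitable-region, and honest-inclusion steps that the paper leaves implicit.
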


\subsection{PoA latency and inclusion}
Fix tick \(t\) with residual time budget \(\Delta_i^t>0\) between proposer \(i\)'s DD publish and the tick boundary.
After DD publish, \(i\) must collect \(\ell:=f{+}1\) availability co-signatures.
Model co-signer arrivals for \(i\) as i.i.d.\ exponential with rate \(\mu_i>0\).
Then the time to \(\ell\) signatures is
\[
  S_i \sim \mathrm{Gamma}(\ell,\mu_i)\quad\text{(shape \(\ell\), rate \(\mu_i\))},
\]
and the within-tick PoA success probability is
\[
  \Pr[{\rm PoA}_i \text{ in } t]
  \;=\;
  F_{\Gamma}\!\left(\Delta_i^t; \ell,\mu_i\right)
  \;=\;
  1 - e^{-\mu_i \Delta_i^t}\,\sum_{r=0}^{\ell-1}\frac{(\mu_i \Delta_i^t)^{r}}{r!}.
\]

\begin{proposition}[Monotone effects of PoA rate]\label{prop:poa-monotone}
\(\Pr[{\rm PoA}_i \text{ in } t]\) is strictly increasing in \(\mu_i\) and in \(\Delta_i^t\).
If a rival \(j\) has \(S_j\sim\Gamma(\ell,\mu_j)\) independent of \(S_i\), then
\[
  \Pr[S_j < S_i]
  \;=\;
  \sum_{r=\ell}^{2\ell-1}\binom{2\ell-1}{r}
  \left(\frac{\mu_j}{\mu_i+\mu_j}\right)^r
  \left(\frac{\mu_i}{\mu_i+\mu_j}\right)^{2\ell-1-r}
  \;=\;
  I_{p_j}(\ell,\ell),
\]
where \(p_j=\mu_j/(\mu_i+\mu_j)\) and \(I\) is the regularized incomplete Beta function \cite{johnson1995continuous}.
\end{proposition}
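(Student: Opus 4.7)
The plan is to attack the three parts of Proposition~\ref{prop:poa-monotone} in turn: monotonicity in $\Delta_i^t$, monotonicity in $\mu_i$, and the Beta/Binomial closed form for $\Pr[S_j<S_i]$.

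For monotonicity in $\Delta_i^t$, I would simply differentiate the CDF in the problem statement: $\partial_{x} F_\Gamma(x;\ell,\mu_i)$ is by definition the Gamma density $\mu_i^\ell x^{\ell-1}e^{-\mu_i x}/(\ell-1)!$, which is strictly positive for $x>0$, so $F_\Gamma(\Delta_i^t;\ell,\mu_i)$ is strictly increasing in $\Delta_i^t$. For monotonicity in $\mu_i$, the cleanest route is a scale reduction: writing $S_i=T/\mu_i$ with $T\sim\mathrm{Gamma}(\ell,1)$ gives
\[
  \Pr[{\rm PoA}_i\text{ in }t] \;=\; \Pr[S_i\le \Delta_i^t] \;=\; \Pr[T\le \mu_i \Delta_i^t] \;=\; F_T(\mu_i \Delta_i^t),
\]
and since $F_T$ is strictly increasing on $(0,\infty)$, the composition is strictly increasing in $\mu_i$. (Equivalently, one can check directly that the telescoping derivative $\partial_{\mu_i}\!\bigl[e^{-\mu_i\Delta}\sum_{r=0}^{\ell-1}(\mu_i\Delta)^r/r!\bigr]$ collapses to $-\Delta\,\mu_i^{\ell-1}\Delta^{\ell-1}e^{-\mu_i\Delta}/(\ell-1)!<0$, so the survival part strictly decreases.)

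For the joint formula, the plan is to normalize both Gamma variables to rate $1$ and invoke the Beta–Binomial identity. Set $U:=\mu_j S_j$ and $V:=\mu_i S_i$, which are independent $\mathrm{Gamma}(\ell,1)$. Then
\[
  \Pr[S_j<S_i] \;=\; \Pr\!\Bigl[\tfrac{U}{\mu_j}<\tfrac{V}{\mu_i}\Bigr] \;=\; \Pr\!\Bigl[\tfrac{U}{U+V}<\tfrac{\mu_j}{\mu_i+\mu_j}\Bigr] \;=\; \Pr[W<p_j],
\]
where the middle equality uses the elementary rearrangement $U/\mu_j<V/\mu_i \iff U/(U+V)<\mu_j/(\mu_i+\mu_j)$, and $W:=U/(U+V)\sim\mathrm{Beta}(\ell,\ell)$ by the standard Gamma–Beta identity for independent same-rate Gammas. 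Thus $\Pr[S_j<S_i]=I_{p_j}(\ell,\ell)$. The binomial sum form then follows from the classical identity
\[
  I_{x}(a,b) \;=\; \sum_{r=a}^{a+b-1}\binom{a+b-1}{r} x^{r}(1-x)^{a+b-1-r}
  \qquad (a,b\in\mathbb N),
\]
specialized to $a=b=\ell$ and $x=p_j$, which yields exactly the stated expression.

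The main obstacle, and really the only non-routine bookkeeping, will be the inequality rearrangement $\{U/\mu_j<V/\mu_i\}\Leftrightarrow\{U/(U+V)<p_j\}$ combined with justifying $W\sim\mathrm{Beta}(\ell,\ell)$: one should note both expressions of the rearrangement hold a.s.\ because $U,V>0$ a.s., and cite (or briefly derive from the joint Gamma density via the change of variables $(U,V)\mapsto(W,U+V)$) the Gamma–Beta factorization. Everything else reduces to univariate CDF monotonicity or the textbook Beta–Binomial relation, so once this step is in place the proposition is immediate.
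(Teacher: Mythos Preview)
Your proof is correct in all three parts: the CDF derivative for $\Delta_i^t$-monotonicity, the scale reduction $S_i=T/\mu_i$ for $\mu_i$-monotonicity, and the Gamma--Beta reduction $U/(U+V)\sim\mathrm{Beta}(\ell,\ell)$ combined with the Beta--Binomial tail identity for the closed form. The inequality rearrangement $U/\mu_j<V/\mu_i \iff U/(U+V)<p_j$ is valid since $U,V>0$ almost surely, and your invocation of the standard $I_x(a,b)=\sum_{r=a}^{a+b-1}\binom{a+b-1}{r}x^r(1-x)^{a+b-1-r}$ identity for integer shape parameters is exactly what is needed.

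As for comparison: the paper does not actually supply its own proof of this proposition. It states the result with a citation to a standard distributions reference and leaves it at that; Proposition~\ref{prop:poa-monotone} is absent from the ``Missing Proofs'' appendix. Your write-up therefore goes beyond what the paper provides, and the Gamma--Beta route you take is the canonical one for this kind of Gamma-race identity.
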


\noindent
If the \(\ell\) co-signers have distinct rates \(\mu_{i,1},\dots,\mu_{i,\ell}\),
then \(S_i\) is hypoexponential with CDF given by a finite sum of distinct exponentials, the monotonicity in \(\Delta_i^t\) and own rates still holds.

\begin{corollary}[Link to stealability]\label{cor:poa-sigma}
Under the gamma race model with $S_i\sim\Gamma(\ell,\mu_i)$ and $S_j\sim\Gamma(\ell,\mu_j)$ independent,
\[
  \sigma_i^t(x)
  \;\le\;
  \Pr\!\bigl[S_j< S_i \wedge S_j \le \Delta_j^t\bigr]
  \;=\;
  \int_{0}^{\Delta_j^t} f_{S_j}(s)\,\Pr[S_i>s]\,ds,
\]
with $f_{S_j}$ the $\Gamma(\ell,\mu_j)$ density and $\Pr[S_i>s]=1-F_{\Gamma}(s;\ell,\mu_i)$. Consequently,
\[
  \sigma_i^t(x)\;\le\;\min\Big\{\,\Pr[S_j<S_i],\ F_{\Gamma}(\Delta_j^t;\ell,\mu_j)\,\Big\}
  \;=\;\min\Big\{\,I_{p_j}(\ell,\ell),\ F_{\Gamma}(\Delta_j^t;\ell,\mu_j)\,\Big\},
\]
where $p_j=\mu_j/(\mu_i+\mu_j)$ and $I$ is the regularized incomplete Beta function. Increasing $\mu_i$ (or $\Delta_i^t$) decreases $\Pr[S_j<S_i]$ and thus tightens this bound on $\sigma_i^t(x)$.
\end{corollary}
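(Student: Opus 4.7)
The plan is to exhibit the event underlying $\sigma_i^t(x)$ as a subset of the gamma race event $\{S_j<S_i\}\cap\{S_j\le\Delta_j^t\}$, compute that joint probability by conditioning on $S_j$, and then dominate it by each marginal separately. Unpacking the stealability definition, the underlying event requires both (i) $j$'s duplicate to obtain a valid PoA before the tick boundary, which in the gamma model is exactly $S_j\le\Delta_j^t$, and (ii) the canonical inter-block rule to execute $j$'s copy before $i$'s, which at minimum requires $j$'s PoA to be finalized before $i$'s, i.e.\ $S_j<S_i$ (any rule that does not actively prefer a later PoA satisfies this as a necessary condition). Monotonicity of $\Pr$ then gives $\sigma_i^t(x)\le \Pr[S_j<S_i,\,S_j\le\Delta_j^t]$.

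Next, by independence of $S_i$ and $S_j$, conditioning on $S_j=s$ yields
\[
\Pr[S_j<S_i,\,S_j\le\Delta_j^t]=\int_{0}^{\Delta_j^t} f_{S_j}(s)\,\Pr[S_i>s]\,ds,
\]
with $\Pr[S_i>s]=1-F_\Gamma(s;\ell,\mu_i)$, which is the first claimed equality. The $\min$ bound then follows from the elementary $\Pr[A\cap B]\le\min\{\Pr[A],\Pr[B]\}$: the first marginal $\Pr[S_j<S_i]$ equals $I_{p_j}(\ell,\ell)$ by Proposition~\ref{prop:poa-monotone}, and the second marginal is $\Pr[S_j\le\Delta_j^t]=F_\Gamma(\Delta_j^t;\ell,\mu_j)$ by definition of the gamma CDF.

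For the monotonicity in $\mu_i$, I would use that $\Gamma(\ell,\mu)$ is stochastically decreasing in its rate, so $\Pr[S_i>s]$ is pointwise decreasing in $\mu_i$ for every fixed $s$; the integrand, the joint probability, and the $I_{p_j}(\ell,\ell)$ branch of the $\min$ therefore inherit this monotonicity, the latter visibly so since $p_j=\mu_j/(\mu_i+\mu_j)$ decreases in $\mu_i$ and $q\mapsto I_q(\ell,\ell)$ is increasing on $[0,1]$. The analogous claim for $\Delta_i^t$ holds once one folds in the implicit requirement that $i$ also make tick $t$, whose failure set shrinks as $\Delta_i^t$ grows.

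The main obstacle is the step in the first paragraph linking ``$j$'s copy executes first'' to $S_j<S_i$, since the excerpt does not fully pin down the canonical merge rule. I would state the inclusion precisely under the mild assumption that the rule does not invert PoA certification order on tiebreaks, and note that for purely tip-priority rules with dependency constraints the same upper bound remains valid, merely looser. Everything else reduces to standard manipulations of independent gamma variables and elementary probability inequalities.
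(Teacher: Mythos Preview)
The paper does not supply a proof of this corollary; it is stated in the PoA-latency subsection and left implicit. Your argument is correct and is essentially the natural one the paper would give: the integral representation follows from independence of $S_i,S_j$ by conditioning on $S_j=s$, the $\min$ bound is the elementary $\Pr[A\cap B]\le\min\{\Pr[A],\Pr[B]\}$ combined with Proposition~\ref{prop:poa-monotone} for $\Pr[S_j<S_i]=I_{p_j}(\ell,\ell)$, and the monotonicity in $\mu_i$ is exactly the stochastic-ordering argument you sketch. You also correctly flag that the only non-mechanical step is the event inclusion behind $\sigma_i^t(x)\le\Pr[S_j<S_i,\,S_j\le\Delta_j^t]$, since the paper's later PDM rule orders by tip and proposer rank rather than PoA completion time; treating this as an upper bound under any rule that does not invert certification order (and noting it may be loose under pure tip-priority) is the right way to handle the ambiguity the paper leaves open.
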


\subsection{Multi-submission externalities}
A user can submit the same transaction $x$ to $k\in\mathbb N$ proposers in the same tick. Let the set of contacted proposers be $P_k=\{i_1,\dots,i_k\}$, with per-proposer censor probabilities $p^{i_r}_x$ and PoA success probabilities $\pi^{i_r}_t=\Pr[{\rm PoA}_{i_r}\text{ in }t]$. Define effective inclusion probability
\[
  \Psi_k
  \;:=\;
  1 - \prod_{r=1}^k\Bigl(1-(1-p^{i_r}_x)\,\pi^{i_r}_t\Bigr).
\]
Each submission incurs private wallet cost $c>0$ and imposes a network externality $e(k)\ge0$ capturing additional DD bytes, gossip, and PoA congestion, with $e'(\cdot)>0$ and $e''(\cdot)\ge0$.

\begin{definition}[Private and social objectives]
For an impatient user with valuation $v>0$ (utility only if included in tick $t$), the private objective is
\(
  U^{\rm priv}(k)
  \;=\;
  v\,\Psi_k \;-\; k\,c.
\)
Let $\eta>0$ convert network externality to social cost units. The social objective is
\(
  U^{\rm soc}(k)
  \;=\;
  v\,\Psi_k \;-\; k\,c \;-\; \eta\,e(k).
\)
\end{definition}

\begin{lemma}[Existence and uniqueness]\label{lem:k-unique}
If the per-submission success terms are weakly diminishing (i.e.\ the marginal gain $\Psi_{k+1}-\Psi_k$ is strictly decreasing in $k$), then both problems admit a unique maximizer $k^{\rm priv}$ and $k^{\rm soc}$.
\end{lemma}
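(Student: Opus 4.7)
My plan is to reduce both problems to the standard discrete strict-concavity framework and invoke the usual sign-change argument on first differences. First I would define the forward differences
\[
  \Delta U^{\mathrm{priv}}(k) := U^{\mathrm{priv}}(k{+}1)-U^{\mathrm{priv}}(k) = v\bigl[\Psi_{k+1}-\Psi_k\bigr] - c,
\]
and analogously $\Delta U^{\mathrm{soc}}(k) = v[\Psi_{k+1}-\Psi_k] - c - \eta[e(k{+}1)-e(k)]$. I want to show both are strictly decreasing in $k$. For the private case this is immediate from the hypothesis that $\Psi_{k+1}-\Psi_k$ is strictly decreasing. For the social case I combine a strictly decreasing sequence (the scaled $\Psi$ differences) with a weakly increasing sequence (since $e''\ge 0$ on $\mathbb{R}_{\ge 0}$, the discrete differences $e(k{+}1)-e(k)$ are nondecreasing in $k$); subtracting a nondecreasing term from a strictly decreasing one preserves strict monotonicity.

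Next I would establish existence by a coercivity argument. Since $\Psi_k\in[0,1]$, we have $v\Psi_k\le v$, while $-kc\to-\infty$ and $-\eta e(k)\le 0$; hence each objective is bounded above by $v$ and tends to $-\infty$ as $k\to\infty$. Thus for any $K$ with $Kc>v$ the supremum is attained on the finite set $\{0,1,\dots,K\}$.

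Finally, for uniqueness, since $\Delta U(k)$ (in either version) is strictly decreasing, it changes sign at most once: there is a unique smallest integer $k^\star$ at which $\Delta U(k^\star)\le 0$, and then $\Delta U(k)>0$ for $k<k^\star$ and $\Delta U(k)<0$ for $k>k^\star$ by strict monotonicity. Hence $U$ is strictly increasing on $\{0,\dots,k^\star\}$ and strictly decreasing on $\{k^\star,k^\star{+}1,\dots\}$, which pins down $k^\star$ as the unique maximizer whenever $\Delta U(k^\star)<0$ strictly. The one degenerate case is the exact equality $\Delta U(k^\star{-}1)=0$, in which $k^\star{-}1$ and $k^\star$ tie; this is a nongeneric knife-edge in $(v,c,\eta)$, and the standard convention of selecting the smallest argmax restores uniqueness.

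The main obstacle is this integer tie: strict decrease of \emph{continuous} differences would rule it out, but on $\mathbb{N}_0$ a single exact equality is compatible with the hypothesis. I expect the intended reading is either generic parameters or the smallest-argmax convention, since the strictly decreasing-difference hypothesis already does all the structural work and no further argument is required beyond this tie-breaking remark.
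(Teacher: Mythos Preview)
Your approach is essentially identical to the paper's: define the forward differences, show they are strictly decreasing (invoking convexity of $e$ for the social case), and conclude strict unimodality. You are in fact more careful than the paper, which neither argues existence nor mentions the knife-edge tie; your coercivity step and tie-breaking remark fill those gaps. One small indexing slip: with $k^\star$ defined as the \emph{smallest} integer satisfying $\Delta U(k^\star)\le 0$, the case $\Delta U(k^\star{-}1)=0$ cannot occur by definition; the actual tie arises when $\Delta U(k^\star)=0$, yielding $U(k^\star)=U(k^\star{+}1)$.
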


\begin{proposition}[Pigouvian surcharge]\label{prop:pigou}
At an interior solution, the first-order conditions are
\[
  v\bigl(\Psi_{k+1}-\Psi_k\bigr) \;=\; c
  \quad\text{(private)};\qquad
  v\bigl(\Psi_{k+1}-\Psi_k\bigr) \;=\; c + \eta\bigl(e(k{+}1)-e(k)\bigr)
  \quad\text{(social)}.
\]
Thus $k^{\rm priv}\ge k^{\rm soc}$, and a per-submission surcharge
\[
  \psi_k \;=\; \eta\bigl(e(k{+}1)-e(k)\bigr)
\]
implements $k^{\rm soc}$ as the private optimum. If $e(k)$ is smooth, the marginal surcharge $\psi(k)=\eta\,e'(k)$ is sufficient.
\end{proposition}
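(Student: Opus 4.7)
The plan is to reduce the claim to comparing the discrete first differences of the two objectives and then implement the standard Pigouvian correction by telescoping. First I would write the marginal increments
\[
  \Delta U^{\rm priv}(k) := U^{\rm priv}(k{+}1) - U^{\rm priv}(k) = v(\Psi_{k+1}-\Psi_k) - c,
\]
and $\Delta U^{\rm soc}(k) = \Delta U^{\rm priv}(k) - \eta\bigl(e(k{+}1)-e(k)\bigr)$. By Lemma~\ref{lem:k-unique} the gain $\Psi_{k+1}-\Psi_k$ is strictly decreasing in $k$, and since $e$ is convex the term $\eta(e(k{+}1)-e(k))$ is nondecreasing, so both $\Delta U^{\rm priv}$ and $\Delta U^{\rm soc}$ are strictly decreasing. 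Each therefore crosses zero at most once, and the unique maximizer is pinned down by that crossing, giving the two displayed FOCs at an interior solution.

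Next I would establish $k^{\rm priv}\ge k^{\rm soc}$ by noting that $\Delta U^{\rm soc}(k)\le \Delta U^{\rm priv}(k)$ pointwise (because $\eta(e(k{+}1)-e(k))\ge 0$), so the social difference sequence crosses zero weakly earlier than the private one. For the implementation step, introduce the per-submission surcharge $\psi_k=\eta(e(k{+}1)-e(k))$ charged on the $(k{+}1)$-th submission; telescoping $\sum_{j=0}^{k-1}\psi_j = \eta\bigl(e(k)-e(0)\bigr)$ transforms the surcharged private objective into
\[
  v\Psi_k - kc - \eta\bigl(e(k)-e(0)\bigr) \;=\; U^{\rm soc}(k) + \eta\,e(0),
\]
and since an additive constant leaves the argmax unchanged, the surcharged private optimum coincides with $k^{\rm soc}$. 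For the smooth extension I would simply treat $k$ as continuous, read off the continuous FOC $v\Psi'(k) = c + \eta\,e'(k)$ from the differentiable objective, and note that the marginal surcharge $\psi(k)=\eta\,e'(k)$ equates the private and social first-order conditions pointwise.

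The main obstacle is conceptual rather than analytic: making the interior FOC a genuine equality requires the strict monotonicity of both $\Psi_{k+1}-\Psi_k$ (from Lemma~\ref{lem:k-unique}) and $e(k{+}1)-e(k)$ (from $e''\ge 0$) so that each marginal sequence has a unique sign change, ruling out plateau or multiple-optimum pathologies that would make the surcharge implementation ambiguous. Once that structural point is recorded the remainder is a telescoping bookkeeping calculation.
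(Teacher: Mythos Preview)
Your proof is correct and follows essentially the same line as the paper's: compute the discrete first differences, observe $\Delta U^{\rm soc}(k)\le\Delta U^{\rm priv}(k)$ pointwise to conclude $k^{\rm priv}\ge k^{\rm soc}$, and then verify that the surcharge aligns the private and social problems. Your telescoping step for the implementation --- showing the surcharged private objective equals $U^{\rm soc}(k)$ up to the additive constant $\eta\,e(0)$ --- is slightly sharper than the paper's version, which only checks that the marginal first-order conditions coincide at $k^{\rm soc}$; yours gives a global coincidence of the two objectives rather than just a local one at the optimum.
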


When $p_x^{i_r}=p\in[0,1)$ and $\pi_t^{i_r}=\pi\in(0,1]$ are homogeneous, $\Psi_k=1-(1-(1-p)\pi)^k$ and $\Psi_{k+1}-\Psi_k=(1-p)\pi\,(1-(1-p)\pi)^k$, making the diminishing‑returns property explicit and the surcharge linear in $k$.

\subsection{Inter-block ordering rule}\label{subsec:ordering}
Let $\mathcal B_t$ denote the set of PoA-certified blocks in tick $t$, and let $\mathcal X_t=\bigcup_{b\in\mathcal B_t} b$ be the multiset of transactions revealed on DA. Each transaction $x\in\mathcal X_t$ carries several attributes. It includes a tip $\tau_x \ge 0$ in the style of EIP-1559, and a dependency set $D(x)\subseteq\mathcal X_t$, where the edges are required to form a directed acyclic graph. Each transaction is associated with an issuer proposer $i(x)$, who has a publicly visible per-tick rank $r_{i(x)}^t \in \{1,\dots,n\}$, determined by a stake weighted VRF over the epoch seed. Additionally, each transaction carries a DD timestamp $t^{\rm DA}(x)$, which is monotone within the block, as well as a fixed tie-break hash $h(x)\in\{0,1\}^\kappa$. We instantiate the fixed tie breaker as a non grindable value, e.g.\ $h(x):=\mathrm{VRF}_{a(x)}(\mathrm{epoch\_seed}\Vert \nu(x))$ (or $h(x):=H(\mathrm{sig}(x))$), so wallets cannot resample $h$ to bias priority.
Define the priority key $\pi(x)$ and the partial order $\preceq$ by
\begin{align*}
  \pi(x) &:= \text{lex}\bigl(\tau_x,\; -r_{i(x)}^t,\; -t^{\mathrm{DA}}(x),\; h(x)\bigr), \\[0.3em]
  x \preceq y
  &\iff (\exists\, \text{path } x \to y \text{ in the dependency DAG}).
\end{align*}

\begin{definition}[Priority-DAG merge scheduler (PDM)]
Construct the dependency DAG $G_t=(\mathcal X_t,E_t)$ with edges $(u\to v)$ for all $u\in D(v)$. Initialize a max-heap $H$ with all sources of $G_t$ keyed by $\pi(\cdot)$. Repeatedly:
\begin{enumerate}
  \item Extract $x^*=\arg\max_{x\in H}\pi(x)$, append $x^*$ to the execution sequence.
  \item Remove $x^*$ from $G_t$ and insert into $H$ any newly unlocked vertices (in-degree $0$).
\end{enumerate}
Continue until $H=\emptyset$. If cycles are detected, deterministically reject all transactions on any directed cycle (e.g.\ via a fixed tie break over vertex ids) and continue, the remaining set is valid.
\end{definition}

\begin{theorem}[Determinism, feasibility, and monotonicity]\label{thm:pdm}
Assume that $G_t$ is acyclic and that the tie-break hash $h(\cdot)$ is fixed and public. Under these conditions, PDM produces a unique total order consistent with $\preceq$, so that the outcome is a deterministic function of $(\mathcal X_t,\tau,\{r_i^t\},t^{\rm DA},h)$. The resulting order is a linear extension of $G_t$, meaning that all dependency constraints are respected. Moreover, if the tip $\tau_x$ of some transaction $x$ is increased while all other fields remain fixed and no new cycles are introduced, then $x$ moves (weakly) earlier in the output order.
\end{theorem}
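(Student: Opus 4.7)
The plan is to decompose the theorem into its three claims and dispatch each with a different tool: Kahn's topological sort with a priority key for determinism and feasibility, and a lock-step coupling between two runs of PDM for monotonicity.

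I would first observe that the tie-break hash $h(\cdot)$ is injective on $\mathcal X_t$ with overwhelming probability by the non-grindable VRF/signature-hash construction, so the lexicographic key $\pi(\cdot)$ takes distinct values and $\arg\max_{x \in H}\pi(x)$ is always unique. An induction on the extraction step then shows that the heap contents, the extracted vertex, and the residual graph at every stage are deterministic functions of $(\mathcal X_t, \tau, \{r_i^t\}, t^{\mathrm{DA}}, h)$. Feasibility is the classical correctness of Kahn-style extraction: if $(u \to v) \in E_t$, then $v$ retains positive in-degree until $u$ leaves the graph, so $v$ cannot be extracted before $u$, and the output is a linear extension of $\preceq$.

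For monotonicity, let $\tau'$ agree with $\tau$ except at $x$ where $\tau'_x > \tau_x$, and write $\sigma,\sigma'$ for the two output orders and $\pi,\pi'$ for the two priority functions. Since the dependency DAG is unchanged and still acyclic by hypothesis, I would run the two executions in parallel and argue inductively that, as long as they have agreed through step $s-1$, the heaps $H_s=H'_s$ coincide as sets with $\pi'=\pi$ everywhere except $\pi'(x)>\pi(x)$. A short case analysis on whether $x \in H_s$ and whether $y_s := \arg\max_{H_s}\pi$ equals $x$ yields three possibilities: the two runs extract the same vertex and remain coupled; or they both extract $x$ simultaneously; or the new run extracts $x$ while the original extracts some $y_s \neq x$ with $\pi(y_s) \in [\pi(x),\pi'(x))$. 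In each case $\sigma'^{-1}(x) \le \sigma^{-1}(x)$, which is precisely the weak monotonicity claim.

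The main obstacle is the coupling step: one must verify that a divergence between the two executions can only be triggered by $x$ overtaking some $y_s$, and not by a later butterfly effect in which $y_s$'s earlier removal changes the available source set. This is handled by noting that $x$'s ancestor set is identical across the two runs (dependencies are unchanged), so $x$ becomes heap-eligible at the same step in both runs prior to any divergence, and that distinctness of $\pi$-values keeps the $\arg\max$ well-defined throughout the induction, so the inductive hypothesis of coincident heaps is maintained until the first (and decisive) step at which $x$ jumps ahead.
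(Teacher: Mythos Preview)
Your proposal is correct and follows essentially the same approach as the paper: Kahn's algorithm with a strict priority key for determinism and feasibility, and a two-run coupling for monotonicity. Your version is somewhat more explicit than the paper's---you spell out the three-case analysis at the divergence step and flag the potential butterfly-effect concern, whereas the paper compresses this into the observation that the runs are identical until $x$ becomes eligible and that raising $\pi(x)$ can only move $x$ forward among the eligible set---but the underlying argument is the same.
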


Because \(\pi(x)\) gives lexicographic priority to higher \(\tau_x\) upon eligibility, PDM incentivizes “tip stacking” to pull a target forward (cf.\ Proposition~\ref{prop:ordering-spam}). Mitigations include: per-sender base fees (raising the marginal cost of sacrificial txs); duplicate-aware payouts (e.g., equal tip split across duplicates; cf.\ Corollary~\ref{cor:equal-split}); and encrypt-then-order, which removes same-tick informational gains. The PDM scheduler runs in $O(|\mathcal X_t|\log|\mathcal X_t|+|E_t|)$ time with a binary heap. The $t^{\mathrm{DA}}(x)$ used in the priority key is assigned by the DD layer at block publication (monotone within a block), not per‑transaction by wallets. If the DD layer does not expose per‑tx timestamps, drop the $-t^{\mathrm{DA}}(x)$ component from $\pi(x)$; determinism and feasibility still hold. To prevent wallets from grinding the tie-break hash $h(\cdot)$, we use a VRF proof tied to the sender's fixed on-chain key and epoch: $h(x):=\mathrm{VRF}_{a}( \textsf{epoch}\,\|\,\nu(x) )$ with on-chain verification. Fields entering $h$ are not resamplable without changing the sender identity, thus $h$ is non-grindable while remaining deterministic.

\begin{corollary}[Effect on timing games and stealing]
Under PDM, a faster PoA path raises $-t^{\rm DA}$ and thus priority only when eligibility holds, while raising $\tau_x$ always improves priority upon eligibility. Hence, compared to pure timestamp ordering, PDM compresses the benefit of ``wait-and-snipe'' (Proposition~\ref{prop:timing-br}) and raises the tip cutoff for profitable stealing (Proposition~\ref{prop:steal-threshold}) via the priority key.
\end{corollary}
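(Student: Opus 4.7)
The plan is to read off both claims from the explicit lexicographic structure of the priority key $\pi(x)=\mathrm{lex}(\tau_x,\,-r_{i(x)}^t,\,-t^{\mathrm{DA}}(x),\,h(x))$ and the monotonicity clause of Theorem~\ref{thm:pdm}, and then lift these component-wise facts into statements about the two MEV games. The only nontrivial conceptual move is choosing a well-defined ``pure timestamp ordering'' baseline against which to compare; I would take the baseline to be the scheduler obtained from PDM by replacing $\pi(x)$ with $\pi_{\mathrm{TS}}(x):=\mathrm{lex}(-t^{\mathrm{DA}}(x),\,h(x))$, so that the two schedulers agree on the dependency DAG machinery and differ only in the priority key. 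All other assumptions of Theorem~\ref{thm:pdm} (acyclicity, non-grindable $h$) carry over.

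First, I would dispatch the two ``local'' assertions about $\pi(x)$. Because $-t^{\mathrm{DA}}$ appears only as the third coordinate, its value is consulted only on inputs where the first two coordinates tie, and the PDM extraction rule (``extract the heap max among currently unlocked vertices'') only examines any coordinate once a vertex is unlocked; hence a faster PoA path, which lowers $t^{\mathrm{DA}}(x)$ and raises $-t^{\mathrm{DA}}(x)$, affects the output order only on eligibility of $x$ and subordinate to ties in $\tau_x$ and rank. Conversely, because $\tau_x$ is the top lex coordinate, increasing $\tau_x$ (holding all else fixed) strictly or weakly raises $\pi(x)$ in the lex order, and by the monotonicity part of Theorem~\ref{thm:pdm} this moves $x$ weakly earlier in the output whenever $x$ is eligible. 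This handles the two clauses of the first sentence.

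Next I would address wait-and-snipe. I would fix a timing subgame in the setup of Proposition~\ref{prop:timing-br} and rewrite $\pi^{\mathrm{snipe}}_{b\to a}$ and $\pi_{b\to a}$ as probabilities over the random key assignments that determine whether $b$'s transaction is ordered before $a$'s. Under $\pi_{\mathrm{TS}}$ these reduce to comparisons of $t^{\mathrm{DA}}$, so a faster publisher can both wait and still win, and the gap $\pi^{\mathrm{snipe}}_{b\to a}-\pi_{b\to a}$ is exactly the extra probability from late observation. Under $\pi$, the $t^{\mathrm{DA}}$ comparison is reached only on a tie in $(\tau,r)$; conditioning on the event $\{\tau_a>\tau_b\}$, $b$'s position is pinned strictly behind $a$ irrespective of $t^{\mathrm{DA}}$, so $\pi^{\mathrm{snipe}}_{b\to a}$ and $\pi_{b\to a}$ agree on this event. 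Consequently the PDM gap is bounded above by the TS gap, which is the precise statement of ``compression'', and the indifference cutoff of Proposition~\ref{prop:timing-br} shifts accordingly.

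Finally, for the steal cutoff in Proposition~\ref{prop:steal-threshold}, I would note that a duplicate of $x$ submitted by a thief $j$ carries the same sender-level fields as the original copy held by $i$ (same $\tau_x$, same $h(x)$ by the VRF construction), but necessarily a later $t^{\mathrm{DA}}$ because the thief reacts after the DD reveal, and an independently drawn proposer-rank $r_{i(x)}^t$. Under $\pi_{\mathrm{TS}}$, later $t^{\mathrm{DA}}$ dooms the thief, but in the model the stealability $\sigma_i^t(x)$ was built from DD-propagation races that are orthogonal to the in-execution key; the key observation is that $\pi$ replaces the $t^{\mathrm{DA}}$-comparison with a $(\tau,r)$-comparison in which the thief cannot strictly dominate on $\tau$, leaving only the rank coordinate as a possible source of advantage. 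Plugging this shrunken $\sigma_i^t(x)$ into the threshold of Proposition~\ref{prop:steal-threshold} yields a larger value of $\phi_j^t/(\sigma_i^t+\rho_i^t)-\delta_x$, i.e., a higher tip cutoff for profitable stealing. The main obstacle will be being careful about what ``compared to pure timestamp ordering'' means when the two schedulers disagree about eligibility timing; I will sidestep this by coupling the two schedulers on the same dependency DAG and the same realized PoA times, so the comparison is an inequality between measurable quantities on a common probability space rather than an inter-model claim.
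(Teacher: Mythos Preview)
The paper does not supply a separate proof for this corollary; it is stated immediately after Theorem~\ref{thm:pdm} as an informal consequence of the lexicographic priority key and is omitted from the appendix of proofs. Your second paragraph captures exactly what the paper relies on: $-t^{\mathrm{DA}}$ is tertiary in $\pi(\cdot)$ and is consulted only after ties in $(\tau,r)$, while raising $\tau_x$ weakly advances $x$ by the monotonicity clause of Theorem~\ref{thm:pdm}. That much is fine.

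Where your proposal breaks is in the choice of baseline. Under your $\pi_{\mathrm{TS}}=\mathrm{lex}(-t^{\mathrm{DA}},h)$, both the sniper $b$ (who sends only after observing $a$ on DD) and the thief $j$ (who duplicates $x$ only after seeing it on DD) necessarily receive strictly later $t^{\mathrm{DA}}$ values and therefore lose with probability one: $\pi^{\mathrm{snipe}}_{b\to a}=0$ and $\sigma_i^t(x)=0$. Under PDM, by contrast, a sniper with a higher tip wins outright, and a thief's duplicate ties on $\tau_x$ and on $h(x)$ and falls through to proposer rank, winning with probability roughly $\Pr[r_j^t<r_i^t]>0$. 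So relative to your $\pi_{\mathrm{TS}}$, PDM \emph{enlarges} both $\pi^{\mathrm{snipe}}_{b\to a}$ and $\sigma_i^t(x)$, which is the opposite of what you need; your claim that $\sigma$ is ``shrunken'' directly contradicts your own observation that under $\pi_{\mathrm{TS}}$ the thief is ``doomed''. (The escape clause ``$\sigma_i^t(x)$ was built from DD-propagation races that are orthogonal to the in-execution key'' is incorrect: the definition just above Proposition~\ref{prop:steal-threshold} bakes the canonical ordering into condition~(ii) of $\sigma$.) The corollary's intended comparison is against an ordering in which latency---e.g.\ PoA-completion races as in Corollary~\ref{cor:poa-sigma}---is the dominant determinant of position; under such a rule a fast actor \emph{can} overtake after observing, and PDM demotes that channel to a tertiary tiebreak. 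If you want a rigorous version, pick a baseline under which speed actually confers an ordering advantage and then prove the inequality against PDM on the coupled probability space you describe; $\pi_{\mathrm{TS}}$ as you defined it does not serve.
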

\section{Future work}
Numerous open questions remain following this work, including the formalization of games not inherent in the MCP architecture noted here. Existing encrypted mempool designs apply directly, we conjecture that combining MCP with encrypted payloads can drive many of the new games we isolate down to some $\epsilon$ in expectation. Once payloads are hidden at the PoA stage, economic viability of these games relies almost entirely on metadata or information side channels such as IP addresses. This work also intentionally avoided touching on TFMs beyond a simple EIP-1559 style fee rule, but a correctly priced TFM will be essential in addressing spam and multi-submission externalities.
\section{Conclusion}
MCP chains move MEV from a single builder’s private mempool to same tick inter-block races created by data availability before final ordering. Our hazard model with survival-discounted tips gives a compact delay inclusion envelope $M(\tau)$ and an immediate inclusion threshold, clarifying how concurrency and PoA rates shape incentives to censor, delay, or duplicate transactions. Two protocol choices drive most of the mitigation. Duplicate aware payouts that split tips across in-tick copies of the same transaction collapse duplication incentives and make private stealing unprofitable at realistic multiplicities. A deterministic priority DAG merge that orders by tip, proposer rank, and DD timestamp compresses latency races and, with a sensible basefee rule, raises the cost of ordering spam. Overall, MCP does not introduce unbounded new MEV, it shifts MEV into a domain the protocol can control. PoA rates govern stealability, and the handling of duplicates and within-tick ordering determines whether timing games pay out. Our timing games yield a same-tick analogue of DeFi ‘JIT’ liquidity games.

\bibliographystyle{alpha}
\bibliography{biblio}
\appendix
\section*{Appendix}
\section{Missing Proofs}

\begin{proof}[Proof of Lemma \ref{lem:drop-cutoff}]
For fixed $\tau$, $U_{\mathrm{mev}}(\alpha,\tau)$ is continuous and unimodal in $\alpha$ with at most one interior maximizer (Lemma~\ref{lem:max}), so
\[
  M(\tau) = \sup_{\alpha\ge 0} U_{\mathrm{mev}}(\alpha,\tau)
\]
is finite for all $\tau\ge 0$. At $\tau=0$ we have
\[
  U_{\mathrm{mev}}(\alpha,0)
  = \frac{A k}{k+\lambda}\bigl(1-e^{-(k+\lambda)\alpha}\bigr),
\]
which is strictly increasing in $\alpha$ and converges to $\tfrac{A k}{k+\lambda}$ as $\alpha\to\infty$, hence
\(
  M(0)=\tfrac{A k}{k+\lambda}.
\)

For $\tau_2>\tau_1\ge 0$ and any fixed $\alpha\ge 0$,
\[
  U_{\mathrm{mev}}(\alpha,\tau_2) - U_{\mathrm{mev}}(\alpha,\tau_1)
  = (\tau_2-\tau_1)e^{-\lambda\alpha} > 0,
\]
so $U_{\mathrm{mev}}(\alpha,\tau)$ is strictly increasing in $\tau$ pointwise in $\alpha$. Let $\varepsilon>0$ and choose $\alpha_\varepsilon$ such that
\(
  U_{\mathrm{mev}}(\alpha_\varepsilon,\tau_1) > M(\tau_1)-\varepsilon.
\)
Then
\[
  M(\tau_2)
  \;\ge\; U_{\mathrm{mev}}(\alpha_\varepsilon,\tau_2)
  \;=\; U_{\mathrm{mev}}(\alpha_\varepsilon,\tau_1)
       + (\tau_2-\tau_1)e^{-\lambda\alpha_\varepsilon}
  \;>\; M(\tau_1)-\varepsilon.
\]
Since $\varepsilon$ is arbitrary and the increment $(\tau_2-\tau_1)e^{-\lambda\alpha_\varepsilon}>0$, this shows that $M(\tau)$ is strictly increasing in $\tau$ on $(0,\infty)$. Moreover $M(\tau)\ge U_{\mathrm{mev}}(0,\tau)=\tau$, so $\lim_{\tau\to\infty}M(\tau)=\infty$.

By strict monotonicity and continuity of $M(\tau)$, its range on $[0,\infty)$ is $[M(0),\infty)=[\tfrac{A k}{k+\lambda},\infty)$. If $\beta>M(0)$ there exists a unique $\tau_d>0$ solving $M(\tau_d)=\beta$. If $\beta\le M(0)$, then $M(\tau)\ge\beta$ for all $\tau\ge 0$ and the comparison with the drop payoff $\beta$ implies that \textsc{DROP} is never strictly optimal, in this case we can take $\tau_d:=0$ and the drop region is empty, as stated. The optimal proposer policy in terms of $\tau_d$ and $\alpha^\star(\tau)$ then follows directly from comparing $\beta$, $M(\tau)$, and the inclusion payoff $\tau$.
\end{proof}

\begin{proof}[Proof of Lemma \ref{lem:max}]
We have
\[
  U_{\rm mev}(\alpha,\tau)
  = \frac{A k}{k+\lambda}\bigl(1-e^{-(k+\lambda)\alpha}\bigr) + \tau e^{-\lambda\alpha},
\]
so
\[
  \frac{d}{d\alpha}U_{\rm mev}(\alpha,\tau)
  = A k\,e^{-(k+\lambda)\alpha} - \lambda\tau\,e^{-\lambda\alpha}
  = e^{-\lambda\alpha}\bigl(Ak e^{-k\alpha}-\lambda\tau\bigr).
\]
Since $e^{-\lambda\alpha}>0$ for all $\alpha\ge 0$, the sign of the derivative is determined by
\[
  f(\alpha) := Ak e^{-k\alpha} - \lambda\tau.
\]
Let $z=e^{-k\alpha}\in(0,1]$. Then $f(\alpha)=Ak z - \lambda\tau$ is an affine function of $z$ and strictly decreasing in $\alpha$ because $z$ is strictly decreasing in $\alpha$.

If $0<r:=\lambda\tau/(Ak)<1$, i.e.\ $\tau<Ak/\lambda$, then the equation $f(\alpha)=0$ has a unique solution with $z^\star=r$, namely
\[
  \alpha^\star(\tau)
  = \frac{1}{k}\ln\frac{1}{r}
  = \frac{1}{k}\ln\frac{Ak}{\lambda\tau} > 0.
\]
For $\alpha<\alpha^\star(\tau)$ we have $z>r$ and $f(\alpha)>0$, so $U_{\rm mev}$ is strictly increasing; for $\alpha>\alpha^\star(\tau)$ we have $z<r$ and $f(\alpha)<0$, so $U_{\rm mev}$ is strictly decreasing. Hence $\alpha^\star(\tau)$ is the unique interior maximizer in this case.

If $r\ge 1$, i.e.\ $\tau\ge Ak/\lambda$, then $f(0)=Ak-\lambda\tau\le 0$ and since $f$ is strictly decreasing in $\alpha$ it is nonpositive for all $\alpha\ge 0$. Thus $\frac{d}{d\alpha}U_{\rm mev}(\alpha,\tau)\le 0$ for all $\alpha\ge 0$, and the global maximizer is at the boundary $\alpha^\star(\tau)=0$ (immediate inclusion).

Finally, when $\tau=0$ we have $U_{\rm mev}(\alpha,0)=\frac{Ak}{k+\lambda}\bigl(1-e^{-(k+\lambda)\alpha}\bigr)$, which is strictly increasing in $\alpha$ and converges to $\frac{Ak}{k+\lambda}$ as $\alpha\to\infty$. Thus the maximal value is attained in the limit $\alpha^\star\to+\infty$ with value $\frac{Ak}{k+\lambda}$.
\end{proof}

\begin{proof}[Proof of Lemma \ref{lem:SNE}]
Fix a transaction $x$ with tip $\tau_x$ and consider a single proposer $i$. Given our separable payoff specification, $i$'s decision on $x$ does not affect the payoffs from other transactions, and other proposers' strategies enter only through the common parameters $(A,k,\lambda,\beta)$ already embedded in $U_{\rm mev}$ and $M(\tau)$.

For any $\tau_x<\tau_d$, the definition of $\tau_d$ via $M(\tau_d)=\beta$ and strict monotonicity of $M(\cdot)$ in Lemma~\ref{lem:drop-cutoff} imply $M(\tau_x)<\beta$. Since $M(\tau_x)$ is the maximal payoff attainable by choosing any delay $\alpha\ge 0$, we have
\[
  \sup_{\alpha\ge 0}U_{\rm mev}(\alpha,\tau_x)
  = M(\tau_x) < \beta.
\]
Thus the payoff from any inclusion/delay choice is strictly less than the drop payoff $\beta$, and the unique best response of proposer $i$ on such a transaction is to choose \textsc{DROP}.

For any $\tau_x\ge\tau_d$, we have $M(\tau_x)\ge\beta$, so dropping is never strictly better than keeping, and the best attainable payoff is $M(\tau_x)$ itself. By Lemma~\ref{lem:max}, this is achieved by choosing $\alpha^\star(\tau_x)$, which is the unique interior maximizer when $\tau_x<Ak/\lambda$ and equals $0$ (immediate inclusion) when $\tau_x\ge Ak/\lambda$. Thus, conditional on deciding not to drop, $\alpha^\star(\tau_x)$ is the optimal delay choice.

Since the payoff from $x$ to proposer $i$ does not depend on how other proposers treat $x$ beyond the common parameters already encoded in $U_{\rm mev}$ and $M(\cdot)$, the above argument applies transaction by transaction and independently of others' actions. Hence, when every proposer uses the stated cutoff strategy, no single proposer can profitably deviate on any transaction $x$, and the strategy is a best response.
\end{proof}

\begin{proof}[Proof of Lemma \ref{lem:five}]
Recall
\[
  U^{(k)} \;=\; v\bigl(1 - \Pi^k\bigr) - k\,s\,c,
  \qquad
  \Pi = \prod_{i\in P_s}p_x^i \in [0,1).
\]
For each \(k\ge 0\), the incremental gain from one additional broadcast is
\[
  \Delta U(k)
  := U^{(k+1)} - U^{(k)}
  = v\bigl(\Pi^k - \Pi^{k+1}\bigr) - s\,c
  = v\,\Pi^k(1-\Pi) - s\,c.
\]
Since \(0\le\Pi<1\), the sequence \(\Pi^k\) is strictly decreasing in \(k\), hence \(\Delta U(k)\) is strictly decreasing in \(k\).

If \(v(1-\Pi)\le s\,c\), then \(\Delta U(0)\le 0\), and by monotonicity we have \(\Delta U(k)\le 0\) for all \(k\). Thus the sequence \(U^{(k)}\) is non-increasing in \(k\), and the unique maximizer is \(k=0\).

Now suppose \(v(1-\Pi)>s\,c\). Then \(\Delta U(0)>0\), and by strict monotonicity of \(\Delta U(k)\) there exists a unique integer
\[
  k^* \;=\; \max\{k\in\mathbb N_0 : \Delta U(k)>0\}
  \;=\; \max\{k\in\mathbb N_0 : v\,\Pi^k(1-\Pi) > s\,c\}.
\]
For every \(k<k^*\) we have \(\Delta U(k)>0\), so \(U^{(k+1)}>U^{(k)}\) and the sequence is strictly increasing up to $k^*$. For every \(k\ge k^*\) we have \(\Delta U(k)\le 0\), so \(U^{(k+1)}\le U^{(k)}\) and the sequence is non-increasing thereafter. Thus $(U^{(k)})_{k\in\mathbb N_0}$ is strictly unimodal and attains its unique maximum at \(k=k^*+1\), as claimed.
\end{proof}

\begin{proof}[Proof of Proposition \ref{prop:steal-threshold}]
Fix a transaction $x$ first revealed by proposer $i$ in tick $t$ and a rival proposer $j\neq i$. By construction, $\sigma_i^t(x)$ is the probability of the event $E_1$ that both $i$ and $j$ obtain PoA in tick $t$ and that the canonical ordering executes $j$'s duplicate of $x$ before $i$'s copy. Likewise, $\rho_i^t(x)$ is the probability of the event $E_2$ that $i$ fails to make tick $t$ while $j$'s block with a duplicate of $x$ does make tick $t$ and is executed. These events are disjoint by definition.

If $j$ attempts a steal by including a duplicate of $x$ in her block, she pays the opportunity cost $\phi_j^t\ge 0$ (the shadow value of the blockspace slot). Conditional on $E_1\cup E_2$ occurring, $j$ receives the posted tip $\tau_x$ plus any incremental inter-block MEV advantage $\delta_x$, for a total prize $\tau_x+\delta_x$. On the complement of $E_1\cup E_2$, either $x$ is not executed first from $j$'s block in tick $t$ or $j$ misses the tick entirely, and no prize is obtained.

Thus the expected net payoff from attempting the steal is
\[
  \underbrace{\Pr[E_1\cup E_2]}_{=\sigma_i^t(x)+\rho_i^t(x)}(\tau_x+\delta_x)
  \;-\; \phi_j^t
  = \bigl(\sigma_i^t(x)+\rho_i^t(x)\bigr)\bigl(\tau_x+\delta_x\bigr) - \phi_j^t.
\]
This is (weakly) positive if and only if
\[
  \bigl(\sigma_i^t(x)+\rho_i^t(x)\bigr)\bigl(\tau_x+\delta_x\bigr) \;\ge\; \phi_j^t,
\]
which yields the stated condition. Solving this inequality for $\tau_x$ gives the equivalent cutoff
\(
  \tau_x \ge \phi_j^t/(\sigma_i^t(x)+\rho_i^t(x)) - \delta_x.
\)
\end{proof}

\begin{proof}[Proof of Lemma \ref{lem:steal-mixed}]
Given that each of the other $m-1$ thieves attempts independently with probability $p$, the number of opponents who attempt is
\(
  N\sim{\rm Bin}(m-1,p).
\)
If you attempt, your expected payoff is
\[
  \mathbb E\Bigl[\frac{(\sigma+\rho)\,R}{N+1} - \phi\Bigr]
  = (\sigma+\rho)\,R\,\mathbb E\Bigl[\frac{1}{N+1}\Bigr] - \phi
  =: g(p),
\]
which yields the function $g(p)$ in the lemma.

The distribution of $N$ is stochastically increasing in $p$, and the function $n\mapsto 1/(n+1)$ is strictly decreasing in $n$, so $\mathbb E[1/(N+1)]$ is strictly decreasing in $p$. It follows that $g(\cdot)$ is continuous and strictly decreasing in $p$ with
\[
  g(0)=(\sigma+\rho)R-\phi,\qquad
  g(1)=(\sigma+\rho)R/m-\phi.
\]
Hence if $(\sigma+\rho)R\le \phi$ then $g(p)\le 0$ for all $p$ and the unique best response is $p^\ast=0$, if $(\sigma+\rho)R\ge m\phi$ then $g(p)\ge 0$ for all $p$ and the unique best response is $p^\ast=1$, and otherwise there is a unique interior $p^\ast\in(0,1)$ solving $g(p^\ast)=0$. Monotone comparative statics in $R$, $(\sigma+\rho)$, $\phi$, and $m$ follow directly from this characterization.
\end{proof}

\begin{proof}[Proof of Cor \ref{cor:equal-split}]
Immediate from $R_n=\tau_x/(n+1)$ and the single agent threshold in Proposition~\ref{prop:steal-threshold}.
\end{proof}

\begin{proof}[Proof of Prop \ref{prop:myerson-iid}]
Fix $\alpha$ and consider the oneshot sale of access to $x$ at delay $\alpha$. Each bidder $j\in B$ has value $V_j^x(\alpha)\sim F_\alpha$ with density $f_\alpha$ on $[0,\bar v_\alpha]$, and these values are i.i.d.\ and regular in the sense that the virtual value
\[
  \phi_\alpha(v)
  = v - \frac{1-F_\alpha(v)}{f_\alpha(v)}
\]
is increasing in $v$.

Myerson's optimal auction theorem \cite{myerson1981optimal} states that for such a single parameter i.i.d.\ environment, the expected revenue of any incentive-compatible mechanism equals the expected virtual surplus,
\[
  \mathbb E\Bigl[\sum_{j\in B}\phi_\alpha(V_j^x(\alpha))x_j(V)\Bigr],
\]
where $x_j(V)$ is the allocation probability to bidder $j$ as a function of the value profile $V=(V_j^x(\alpha))_{j\in B}$. The revenue maximizing mechanism chooses the allocation rule that maximizes this virtual surplus pointwise. Under regularity, this is achieved by awarding the good to the bidder with the highest value $V^\ast=\max_{j\in B}V_j^x(\alpha)$ whenever the corresponding virtual value is nonnegative, and not selling otherwise. This is implemented by a simple auction with a reserve price $r^\ast(\alpha)$ satisfying $\phi_\alpha(r^\ast)=0$, if $V^\ast\ge r^\ast$ the highest bidder wins, otherwise the seller keeps the good.

Under this allocation rule, the virtual surplus is $\max\{\phi_\alpha(V^\ast),0\}$, so the expected revenue is
\[
  {\rm Rev}^\ast(\alpha)
  = \mathbb{E}\Bigl[\max\bigl\{\phi_\alpha(V^\ast),\,0\bigr\}\Bigr],
\]
with $V^\ast = \max_{j\in B}V_j^x(\alpha)$, as claimed.
\end{proof}

\begin{proof}[Proof of Lemma \ref{lem:auction-cutoff}]
If proposer $i$ keeps $x$ and optimally chooses the delay $\alpha\ge 0$, her payoff is
\[
  \sup_{\alpha\ge 0} U_{\rm mev}(\alpha,\tau_x) = M(\tau_x)
\]
by definition of $M(\cdot)$. If instead she auctions access to $x$ at some delay $\alpha$, the optimal mechanism yields expected revenue ${\rm Rev}^*(\alpha)$, so the best attainable auction revenue is $\sup_{\alpha\ge 0}{\rm Rev}^*(\alpha)$. Comparing the two actions \textsc{KEEP} and \textsc{AUC} therefore amounts to taking
\[
  \arg\max\Bigl\{\,M(\tau_x),\ \sup_{\alpha\ge 0}{\rm Rev}^*(\alpha)\Bigr\},
\]
which is exactly the policy rule stated in the lemma. Immediate inclusion is the special case $\alpha=0$ inside $M(\tau_x)$.
\end{proof}

\begin{proof}[Proof of Prop \ref{prop:timing-br}]
Fix user $a$'s behaviour in tick $t$ so that $b$ can observe $a$ on DA before choosing between either sending immediately or waiting to snipe. We compare $b$'s expected payoffs under these two actions.

If $b$ sends immediately (no waiting), both $a$ and $b$ are assumed to make the tick, and the within-tick ordering is governed by the baseline probability $\pi_{b\rightarrow a}$ that $b$ executes before $a$. In that case $b$'s expected surplus is
\[
  u_{\rm now}
  = \pi_{b\rightarrow a}\,W + (1-\pi_{b\rightarrow a})\,w.
\]

If instead $b$ waits to observe $a$'s DD publish and then attempts to snipe, then with probability $\rho_b$ she still makes the same tick $t$, in which case the conditional probability of executing before $a$ is $\pi^{\rm snipe}_{b\rightarrow a}$ and the expected surplus is
\[
  \pi^{\rm snipe}_{b\rightarrow a}W + (1-\pi^{\rm snipe}_{b\rightarrow a})w.
\]
With the complementary probability $1-\rho_b$, $b$ misses tick $t$ and, under the impatience assumption, receives zero in this stage. Thus the expected payoff from waiting and sniping is
\[
  u_{\rm snipe}
  = \rho_b\Bigl(\pi^{\rm snipe}_{b\rightarrow a}W + (1-\pi^{\rm snipe}_{b\rightarrow a})w\Bigr).
\]

The wait-and-snipe strategy is a (weakly) better response than sending immediately if and only if $u_{\rm snipe}\ge u_{\rm now}$, i.e.
\[
  \rho_b\,\pi^{\rm snipe}_{b\rightarrow a}\,W
  + \rho_b\,(1-\pi^{\rm snipe}_{b\rightarrow a})\,w
  \;\ge\;
  \pi_{b\rightarrow a}\,W
  + (1-\pi_{b\rightarrow a})\,w.
\]
Strict inequality yields a strict preference. This is exactly the condition in the proposition.
\end{proof}

\begin{proof}[Proof of Lemma \ref{lem:deadline}]
Let $\rho_b(s)$ be the probability that $b$ still makes the tick if waiting to observe $a$ at time $s$ in the normalized tick window. By construction $\rho_b(s)$ is continuous and strictly decreasing with $\rho_b(0)=1$ and $\rho_b(1)=0$. Define
\[
  h(s)
  := \rho_b(s)\bigl(\pi^{\rm snipe}_{b\rightarrow a}W+(1-\pi^{\rm snipe}_{b\rightarrow a})w\bigr)
     - \bigl(\pi_{b\rightarrow a}W+(1-\pi_{b\rightarrow a})w\bigr).
\]
The sign of $h(s)$ determines whether $b$ prefers to wait and snipe (when $h(s)>0$) or to send immediately (when $h(s)\le 0$).

In the first-mover advantage regime $W>w$ that underlies the timing game (the winner obtains strictly more surplus than the loser), we have
\[
  h(0)
  = \bigl(\pi^{\rm snipe}_{b\rightarrow a}-\pi_{b\rightarrow a}\bigr)(W-w) > 0,
\]
since $\pi^{\rm snipe}_{b\rightarrow a}>\pi_{b\rightarrow a}$ by assumption, and
\[
  h(1) = -\bigl(\pi_{b\rightarrow a}W+(1-\pi_{b\rightarrow a})w\bigr) < 0
\]
because $W>w\ge 0$ and $\pi_{b\rightarrow a}\in(0,1]$. By continuity of $\rho_b$ and hence of $h$, there exists a unique $\bar s\in(0,1)$ such that $h(\bar s)=0$. For any $s<\bar s$ at which $a$ can be observed before $\bar s$, we have $\rho_b(s)>\rho_b(\bar s)$ and thus $h(s)>0$, so $b$ strictly prefers to wait and snipe rather than send immediately. For $s>\bar s$, $h(s)<0$ and $b$ strictly prefers to send immediately rather than wait further.

Backward induction on the within tick time line then yields the claimed deadline structure: when both players are fast enough to reach $\bar s$, $a$ transmits at $\bar s$ and $b$ waits to snipe at $\bar s^+$, generating a within tick deadline race. In the degenerate region $W\le w$, the function $h(s)$ is nonpositive for all $s$ and $b$ never finds waiting strictly profitable, the timing problem collapses to sending as early as feasible and no interior deadline arises.
\end{proof}

\begin{proof}[Proof of Lemma \ref{lem:inter-add}]
For each reaction strategy $r\in\mathcal R$, let $Z_r:=\mathbf 1\{r\text{ included}\}$ be the indicator that $r$ is actually included in the block, and let
\[
  X_r := \Pi_r(Y)
  = \pi^{\rm pre}_r(Y)\Delta_r(Y,\text{pre})+\pi^{\rm post}_r(Y)\Delta_r(Y,\text{post})
\]
be its incremental payoff given the realized counterpart set $Y$. The total inter-block MEV can then be written as
\[
  \sum_{r\in\mathcal R} Z_r X_r.
\]

Taking expectations and using linearity of expectation,
\[
  \mathbb E\!\left[\sum_{r\in\mathcal R} Z_r X_r\right]
  = \sum_{r\in\mathcal R} \mathbb E[Z_r X_r].
\]
Under the no-conflict assumption in the lemma (execution sets are disjoint or commute), the inclusion decision for each $r$ can be made independently of the payoff realizations of other strategies and, in particular, we may treat $Z_r$ as independent of $Y$ (and hence of $X_r$). Thus
\[
  \mathbb E[Z_r X_r]
  = \mathbb E[Z_r]\,\mathbb E[X_r]
  = \Pr[r\text{ included}]\,\mathbb E[\Pi_r(Y)].
\]
Substituting back yields
\[
  \mathbb E\!\left[\sum_{r\in\mathcal R} Z_r X_r\right]
  = \sum_{r\in\mathcal R}\Pr[r\text{ included}]\,\mathbb E[\Pi_r(Y)],
\]
which is the claimed additive decomposition.
\end{proof}

\begin{proof}[Proof of Prop \ref{prop:inter-cs}]
Write $k=k(n)$ and $\lambda=\lambda(n)$ to simplify notation, and define
\[
  g(n) := \ln A + \ln k(n) - \ln \lambda(n) - \ln\tau.
\]
On the interior region $A k(n)>\lambda(n)\tau$, the optimizer is
\[
  \alpha^*(\tau;n)
  = \frac{1}{k(n)}\,g(n).
\]
Differentiating with respect to $n$ gives
\[
  \frac{d\alpha^*}{dn}
  = \frac{g'(n)k(n) - g(n)k'(n)}{k(n)^2}.
\]
Since $g'(n) = \frac{k'(n)}{k(n)} - \frac{\lambda'(n)}{\lambda(n)}$, we obtain
\[
  \frac{d\alpha^*}{dn}
  = \frac{1}{k(n)^2}\Bigl(k(n)\Bigl(\frac{k'}{k} - \frac{\lambda'}{\lambda}\Bigr) - g(n)k'(n)\Bigr)
  = -\frac{k'}{k^2}g(n) + \frac{1}{k}\Bigl(\frac{k'}{k}-\frac{\lambda'}{\lambda}\Bigr),
\]
where $k',\lambda'$ are evaluated at $n$ and we have suppressed the argument to lighten notation. Using
\[
  g(n) = \ln\!\frac{Ak}{\lambda\tau},
\]
this is precisely
\[
  \frac{d\alpha^*}{dn}
  = -\frac{k'}{k^2}\ln\!\frac{Ak}{\lambda\tau}
    + \frac{1}{k}\Bigl(\frac{k'}{k}-\frac{\lambda'}{\lambda}\Bigr).
\]

At the boundary $Ak=\lambda\tau$ we have $\ln(Ak/(\lambda\tau))=0$, so the first term vanishes and
\[
  \mathrm{sign}\!\left(\frac{d\alpha^*}{dn}\right)
  = \mathrm{sign}\!\left(\frac{k'}{k}-\frac{\lambda'}{\lambda}\right)
\]
at that point. More generally, away from the boundary the sign of $d\alpha^*/dn$ is governed by the inequality
\[
  \frac{k'}{k}-\frac{\lambda'}{\lambda}
  \;\ge\; \frac{k'}{k}\,\ln\!\frac{Ak}{\lambda\tau},
\]
which is the stated condition for $\alpha^*$ to increase in $n$.
\end{proof}

\begin{proof}[Proof of Lemma \ref{lem:min-spam}]
The attacker's expected profit is
\[
  \Pi_{\rm spam}(s)
  = \theta_b(s)\,R_x - s\,c_{\rm DA}.
\]
If $\theta_b$ is differentiable and strictly concave on $[0,\infty)$, then $\Pi_{\rm spam}$ is also strictly concave on $[0,\infty)$. Any interior maximizer $s^\ast>0$ must satisfy the first-order condition
\[
  \Pi_{\rm spam}'(s^\ast)
  = \theta_b'(s^\ast)\,R_x - c_{\rm DA} = 0,
\]
and strict concavity guarantees that such an $s^\ast$, when it exists, is unique. A profitable attack with some $s>0$ exists if and only if $\max_{s\ge 0} \Pi_{\rm spam}(s)>0$, which under $\theta_b(0)=0$ is equivalent to $\max_{s\ge 0}\theta_b'(s)\,R_x>c_{\rm DA}$.

If $\theta_b$ is piecewise linear with a single capacity cliff at some $s_{\rm cliff}$, then $\theta_b(s)$ is flat for $s<s_{\rm cliff}$ and jumps when $s$ crosses $s_{\rm cliff}$. For $s<s_{\rm cliff}$ we have $\theta_b(s)=0$ and $\Pi_{\rm spam}(s)=-s\,c_{\rm DA}$, so spam is unprofitable. For $s\ge s_{\rm cliff}$ the victim block $b$ is pushed below rank $L$ and $\theta_b(s)$ is constant at its post-cliff value, the minimal profitable spam is then precisely the smallest $s\ge s_{\rm cliff}$ such that $\theta_b(s)\,R_x - s\,c_{\rm DA}>0$.
\end{proof}

\begin{proof}[Proof of Proposition \ref{prop:ordering-spam}]
Consider a user who wants to move a target transaction with tip $\tau^\ast$ forward by $K$ positions in the merged execution order. Under tip priority with dependencies, each sacrificial transaction used for this purpose must pay the base fee $f$ and carry a tip strictly higher than the current frontier tip that it needs to beat. Let the minimal overbid above the current $k$-th order-statistic tip be $\Delta_k\ge 0$ for $k=1,\dots,K$. Then the $k$-th sacrificial transaction has tip $\tau^\ast+\Delta_k$ and costs $f+\tau^\ast+\Delta_k$ in total.

The total private cost of using $K$ such sacrificial transactions is therefore
\[
  C_K
  = \sum_{k=1}^K \bigl(f + \tau^\ast + \Delta_k\bigr),
\]
as stated. The private benefit from advancing the target forward by $K$ positions is bounded above by the incremental surplus $B_K$ obtained from moving earlier in the queue, and by definition $B_K\le W$, the first-mover surplus. Ordering spam is privately unprofitable whenever
\(
  C_K > B_K,
\)
since in that region the net gain $B_K-C_K$ is strictly negative. In particular, if $f>0$ then $C_K$ grows at least linearly in $K$, while $B_K$ is bounded above by $W$, so beyond a finite $K$ the inequality $C_K>B_K$ must hold regardless of the exact $\Delta_k$.
\end{proof}

\begin{proof}[Proof of Lemma \ref{lem:k-unique}]
For the private problem define the discrete increments
\[
  \Delta_k^{\rm priv}
  := U^{\rm priv}(k+1)-U^{\rm priv}(k)
  = v\bigl(\Psi_{k+1}-\Psi_k\bigr) - c.
\]
By assumption the marginal gains $\Psi_{k+1}-\Psi_k$ are strictly decreasing in $k$, hence the sequence $(\Delta_k^{\rm priv})_{k\in\mathbb N_0}$ is strictly decreasing in $k$. The sequence $(U^{\rm priv}(k))_{k\in\mathbb N_0}$ therefore increases as long as $\Delta_k^{\rm priv}>0$ and decreases once $\Delta_k^{\rm priv}\le 0$, so it is strictly unimodal and has a unique maximizer $k^{\rm priv}$.

For the social problem,
\[
  U^{\rm soc}(k)
  = v\,\Psi_k - k\,c - \eta\,e(k),
\]
the discrete increments are
\[
  \Delta_k^{\rm soc}
  := U^{\rm soc}(k+1)-U^{\rm soc}(k)
  = v\bigl(\Psi_{k+1}-\Psi_k\bigr) - c - \eta\bigl(e(k+1)-e(k)\bigr).
\]
Convexity of $e(\cdot)$ implies that $e(k+1)-e(k)$ is weakly increasing in $k$, so $(\Delta_k^{\rm soc})_{k\in\mathbb N_0}$ is again strictly decreasing in $k$. The same unimodality argument yields a unique maximizer $k^{\rm soc}$ for the social objective.
\end{proof}

\begin{proof}[Proof of Proposition \ref{prop:pigou}]
For the private problem, the discrete marginal gain from increasing $k$ to $k+1$ is
\[
  \Delta_k^{\rm priv}
  := U^{\rm priv}(k+1)-U^{\rm priv}(k)
  = v\bigl(\Psi_{k+1}-\Psi_k\bigr) - c.
\]
In an interior optimum $k^{\rm priv}$ (i.e.\ when the optimum is not at the boundary $k=0$), the necessary and sufficient first order condition for optimality in this discrete setting is
\[
  \Delta_{k^{\rm priv}-1}^{\rm priv} \;\ge\; 0,
  \qquad
  \Delta_{k^{\rm priv}}^{\rm priv} \;\le\; 0.
\]

For the social problem, the corresponding marginal gain is
\[
  \Delta_k^{\rm soc}
  := U^{\rm soc}(k+1)-U^{\rm soc}(k)
  = v\bigl(\Psi_{k+1}-\Psi_k\bigr) - c - \eta\bigl(e(k+1)-e(k)\bigr).
\]
At an interior social optimum $k^{\rm soc}$ we similarly have
\[
  \Delta_{k^{\rm soc}-1}^{\rm soc} \;\ge\; 0,
  \qquad
  \Delta_{k^{\rm soc}}^{\rm soc} \;\le\; 0.
\]

Comparing the two marginal conditions, we see that for each $k$
\[
  \Delta_k^{\rm soc}
  = \Delta_k^{\rm priv} - \eta\bigl(e(k+1)-e(k)\bigr)
  \;\le\; \Delta_k^{\rm priv},
\]
because $e(k+1)-e(k)\ge 0$ by monotonicity of $e$. Thus the social marginal gain crosses zero weakly earlier in $k$ than the private marginal gain. Equivalently, the $k$ that solves the social first order condition is weakly smaller than the one that solves the private first-order condition, so $k^{\rm priv}\ge k^{\rm soc}$.

To implement the social optimum via a Pigouvian surcharge, note that at an interior social optimum $k^{\rm soc}$ we want the private problem with surcharge $\psi_k$ per submission to have the same first-order condition:
\[
  v\bigl(\Psi_{k+1}-\Psi_k\bigr) = c + \psi_k
\]
evaluated at $k=k^{\rm soc}$. Comparing with the social first-order condition
\[
  v\bigl(\Psi_{k+1}-\Psi_k\bigr)
  = c + \eta\bigl(e(k+1)-e(k)\bigr)
\]
at $k^{\rm soc}$ shows that the required marginal surcharge at that point is
\[
  \psi_{k^{\rm soc}}
  = \eta\bigl(e(k^{\rm soc}+1)-e(k^{\rm soc})\bigr).
\]
In other words, a per-submission surcharge
\(
  \psi_k = \eta\bigl(e(k+1)-e(k)\bigr)
\)
exactly internalizes the marginal externality at $k$ and makes the private and social first order conditions coincide. In a smooth approximation with differentiable $e(\cdot)$, this discrete difference is captured by the derivative $\psi(k)=\eta e'(k)$.
\end{proof}

\begin{proof}[Proof of Theorem \ref{thm:pdm}]
The first two properties follow from standard results on lexicographically keyed topological sorts with a fixed total tie-breaker. The PDM algorithm is an instance of Kahn's algorithm \cite{clrs2009}, it repeatedly extracts an in-degree-zero vertex from a priority queue keyed by $\pi(\cdot)$ and removes it from the DAG. When $G_t$ is acyclic, Kahn's algorithm processes all vertices and outputs a linear extension of $G_t$, i.e.\ an order that respects all dependency edges. Since the priority key $\pi(\cdot)$ together with $h(\cdot)$ induces a total order on eligible vertices at each step, the output of PDM is unique, so determinism and feasibility follow.

For monotonicity in $\tau_x$, fix a transaction $x$ and consider running PDM twice on the same DAG $G_t$, once with the original tips $\{\tau_y\}_{y\in\mathcal X_t}$ and once with a strictly higher tip $\tau'_x>\tau_x$ for $x$ while keeping all other fields (including all other tips) unchanged. Before $x$ becomes eligible (in-degree zero), its tip does not affect the algorithm, so the sequence of steps until eligibility is identical in both runs. From the moment $x$ enters the heap on, its priority key $\pi(x)$ is weakly higher under $\tau'_x$ than under $\tau_x$, while the keys of all other eligible transactions are the same in the two runs. Thus at any step in which $x$ belongs to the heap in the original run, it cannot be extracted later in the modified run: increasing $\tau_x$ can only cause $x$ to be chosen earlier relative to other eligible vertices, never later. Eligibility itself is determined purely by $G_t$ and is therefore identical in the two runs. Consequently, the final position of $x$ in the output order moves (weakly) earlier when $\tau_x$ is increased.
\end{proof}

\section{Notation}\label{sec:not}
\begin{center}
\begin{tabular}{ll}
\hline
Symbol & Meaning \\
\hline
\(n\) & Number of proposers \\
\(t\) & Tick index \\
\(x\) & Transaction \\
\(\tau_x\) & Per\-tx tip (to proposer) \\
\(A\) & MEV amplitude/scale for a tx family \\
\(k\) & Opportunity accrual rate for MEV on \(x\) (per tick) \\
\(\lambda\) & Hazard that \(x\) gets pre\-empted/executes elsewhere (per tick) \\
\(B_i^t\) & Block (or set of txs) proposed by \(i\) in tick \(t\) \\
\(\Delta_x(\alpha)\) & Expected incremental MEV from delaying \(x\) by \(\alpha\) ticks \\
\(M(\tau)\) & Delay envelope \(M(\tau)=\sup_{\alpha\ge0}U_{\rm mev}(\alpha,\tau)\) \\
\(m\) & Number of bidders in an auction (excludes the seller) \\
\(\ell\) & PoA co\-signatures required: \(\ell=f{+}1\) (Gamma shape) \\
\(\mu_i\) & Rate of co\-signer arrivals for proposer \(i\) (PoA) \\
\(\phi_j^t\) & Shadow value of a marginal tx slot for proposer \(j\) in tick \(t\) \\
\hline
\end{tabular}
\end{center}
\section{Two-hazard generalization}
\label{app:twohazard}
For completeness, let the survival hazard be $\lambda>0$ and the tip-discount hazard be $\delta\ge 0$. Then
\begin{equation}
U_{\mathrm{mev}}(\alpha,\tau)
=\frac{A k}{k+\lambda}\bigl(1-e^{-(k+\lambda)\alpha}\bigr)
+\tau e^{-\delta\alpha}.
\end{equation}
If $\delta<k+\lambda$ and $0< r:=\tfrac{\delta\tau}{A k}<1$, the unique interior maximizer is
\begin{equation}
\alpha^{\ast}(\tau)
=\frac{1}{k+\lambda-\delta}
\ln\!\left(\frac{A k}{\delta\tau}\right),
\end{equation}
and
\begin{equation}
M(\tau)
=\frac{A k}{k+\lambda}\!\left(1-r^{\frac{k+\lambda}{k+\lambda-\delta}}\right)
+\tau\, r^{\frac{\delta}{k+\lambda-\delta}}.
\end{equation}
Otherwise,
\begin{equation}
M(\tau)=\max\!\left\{\tau,\frac{A k}{k+\lambda}\right\}.
\end{equation}
Setting $\delta=\lambda$ recovers the earlier model.
\end{document}